\renewcommand{\tt}{\ttfamily}
\newcommand{\codefont}{\small\tt}
\newcommand{\code}[1]{\mbox{\codefont{#1}}}
\newcommand{\ccode}[1]{``\code{#1}''}
\newcommand{\fcode}[1]{\mbox{\codefont{\footnotesize{#1}}}} 
\newcommand{\us}{\raise-.8ex\hbox{-}}
\newcommand{\xtilde}{\!\raise-.75ex\hbox{\char`\~}} 
\newcommand{\true}{\code{True}\xspace}
\newcommand{\ctxt}[1]{{\cal{C}}[#1]} 
\def\deriv#1{\buildrel #1 \over \to}
\newcommand{\tostar}{\mathrel{\deriv{*}}}
\newcommand{\rulename}[1]{\mbox{\sf #1}}
\newcommand{\listline}{\vrule width0pt depth1.5ex}
\begin{document}

\title{Equivalence Checking of Non-deterministic Operations\thanks{%
This is an extended version of a paper appeared in
Proc.\ of the 14th International Symposium on Functional and Logic Programming
(FLOPS 2018), Springer LNCS 10818, pp. 149-165, 2018}}

\author{
Sergio Antoy\inst{1}
\kern1em
Michael Hanus\inst{2}
}
\institute{
Computer Science Dept., Portland State University, Oregon, U.S.A.\\
\email{antoy@cs.pdx.edu}\\[1ex]
\and
Institut f\"ur Informatik, CAU Kiel, D-24098 Kiel, Germany. \\
\email{mh@informatik.uni-kiel.de}
}

\maketitle

\begin{abstract}
Checking the semantic equivalence of operations is an important
task in software development.
For instance, regression testing is a routine task performed when
software systems are developed and improved,
and software package managers require the equivalence
of operations in different versions of a package within the same
major number version.
A solid foundation is required
to support a good automation of this process.
It has been shown that the notion of equivalence
is not obvious when non-deterministic features are present.
In this paper, we discuss a general notion of equivalence
in functional logic programs
and develop a practical method to check it.
Our method is integrated in a property-based testing tool
which is used in a software package manager to check
the semantic versioning of software packages.
\end{abstract}

\section{Motivation}
\label{sec:intro}

Functional logic languages combine the most important
features of functional and logic programming in a single language
(see \cite{AntoyHanus10CACM,Hanus13} for recent surveys).
In this paper we consider Curry \cite{Hanus16Curry},
a contemporary functional logic language that
conceptually extends Haskell with common features of logic programming.
Hence, a Curry function is evaluated lazily and may
be non-deterministic when defined by overlapping rules.
As discussed in \cite{BacciEtAl12},
the combination of these features raises new issues
for defining the equivalence of expressions.
Actually, three different notions of equivalence can be distinguished:
\begin{enumerate}
\item \emph{Ground equivalence}: Two expressions are equivalent
if they produce the same results when their variables are replaced
by ground terms.
\item \emph{Computed-result equivalence}:
Two expressions are equivalent if they produce the same outcomes,
i.e., variables in expressions are considered as free variables
which might be instantiated during the evaluation process.
\item \emph{Contextual equivalence}:
Two expressions are equivalent if they produce the same outcomes
in all possible contexts.
\end{enumerate}
Ground equivalence seems reasonable for functional programs
since free variables are not allowed in
expressions to be evaluated in functional programming.
For instance, consider the Boolean negation defined by:
\begin{curry}
not False = True
not True  = False
\end{curry}
The expressions \code{not$\,$(not$\;$x)} and \code{x} are ground
equivalent, which can be checked easily by instantiating \code{x}
to \code{True} and \code{False}, respectively, and evaluating
both expressions.
However, these expressions are not computed-result equivalent
w.r.t.\ the narrowing semantics of functional logic programming.
The expression \code{not$\,$(not$\;$x)} evaluates to the following
two outcomes.
The first outcome is the value \code{False} where the variable
\code{x} is bound to \code{False} which we compactly represent as
\code{\{x=False\}$\;$False}.
The second outcome is \code{\{x=True\}$\;$True}.
By contrast, the expression \code{x} evaluates to the single result
\code{\{\}$\;$x} without instantiating the free variable \code{x}.
Due to these differences, Bacci et al.\ \cite{BacciEtAl12} state
that ground equivalence is ``the (only possible) equivalence notion
used in the pure functional paradigm.''
As we will see later, this is not true since contextual equivalence
is also relevant in non-strict functional languages.

The previous example shows that the evaluation
of ground equivalent expressions might result in answers
with different degrees of instantiation.
However, the presence of logic variables and non-determinism
might also lead to different results when ground equivalent
expressions are put in some context as the following contrived example
shows \cite{BacciEtAl12} (a more natural example will be shown later).
\begin{example}
  \label{contrived-ground-equiv}
Consider the following functions:
\begin{curry}
f x = C (h x)                     g A = C A
h A = A
\end{curry}
The expressions \code{f$\;$x} and \code{g$\;$x} are
computed-result equivalent since the only computed result
is \code{\{x=A\}$\;$C$\;$A}.
Now consider the following operation:
\begin{curry}
k (C x) B = B
\end{curry}
Below we show the evaluations of \code{k$\,$(f$\;$x)$\;$x}
and \code{k$\,$(g$\;$x)$\;$x} in which the subscript of ``$\to$''
shows the binding of the step:
\begin{curry}
k (f x) x $~\to_\mathtt{\{\}}~~~\,$ k (C (h x)) x $~\to_\mathtt{\{x=B\}}~$ B
k (g x) x $~\to_\mathtt{\{x=A\}}~$ k (C A) A 
\end{curry}
Although \code{f$\;$x} and \code{g$\;$x} produce exactly the
same set of values, namely the singleton \code{\{A\}},
they produce different outcomes within some context.
This shows that there are ground and computed-result equivalent expressions
that are not contextually equivalent.
The reason of this behavior is the evaluation strategy which is
non-strict (lazy) and binds free variables (narrows).
More specifically, \code{h$\;$x} is not evaluated (it is ignored)
in the first computation
and \code{x} is bound to \code{A} when \code{g$\;$x} is evaluated
in the second computation.
\end{example}
The equivalence of operations is important
when existing software packages are maintained, e.g.,
refactored or implemented with more efficient data structures.
In this case, we want to ensure that operations available
in the API of both versions of a software package are equivalent,
unless the intent was to change the API.
For this purpose, software package management systems
associate version numbers to software packages.
In the semantic versioning standard,\footnote{\url{http://www.semver.org}}
a version is identified by major, minor, and patch numbers,
separated by dots, and an optional pre-release specifier.
For instance, \code{2.0.1} and \code{3.2.1-alpha.2} are
valid version identifiers.
An intended and incompatible change of API operations
is marked by a change in the major version number.
Thus, operations available in two versions of a package with
identical major version numbers should be equivalent.
Unfortunately, most package managers do not check
this equivalence but leave it as a recommendation to the
package developer.

Improving this situation is the motivation for our work.
We want to develop a tool for checking the equivalence
of two operations.
Since we aim at integrating this kind of semantic versioning
checking in a practical software package manager
\cite{Hanus17ICLP}, the tool should be fully automatic.
Thus, we are going to test equivalence properties rather
than to verify them. Although this might be unsatisfactory
from a theoretical point of view, it could be quite powerful
from a practical point of view and might prevent wasting time
that would go into attempting to prove properties that the tool
has already disproved.
For instance, property-based test tools like
QuickCheck \cite{ClaessenHughes00} provide great
confidence in programs by checking program properties with many
test inputs.
For instance, we could check the equivalence of
two operations $f$ and $f'$ by checking the equation
$f\;x = f'\;x$ with many values for $x$.
The previous discussion of equivalence criteria
shows that this property checks only the ground equivalence
of $f$ and $f'$.
However, in the context of semantic versioning checking,
ground equivalence is too restricted since
equivalent operations should deliver the same results
in any context.
Therefore, contextual equivalence is desired.
Actually, this kind of equivalence has been proposed
in \cite{AntoyHanus12PADL} as the only notion
to state the correctness of an implementation w.r.t.\ a
specification in functional logic programming.
Unfortunately, the automatic checking of contextual equivalence
with property-based test tools does not seem feasible
due to the boundless number of possible contexts.
Therefore, Bacci et al.\ \cite{BacciEtAl12} state:
``In a test-based approach\ldots{}the addition of a further outer context
would dramatically alter the performance.''
Consequently, the authors abandon the use of a standard
property-based test tool in their work.

In this paper we show that we can use such tools for
contextual equivalence (and, thus, semantic versioning) checking
if we use an appropriate encoding of test data.
For this purpose, we develop some theoretical results
that allow us to reduce the contexts to be considered
for equivalence checking.
From these results, we show how property-based testing
can be used for this purpose.
Based on these results, we extend an existing
property-based test tool for functional logic programs
\cite{Hanus16LOPSTR} to test the equivalence of operations.
This is the basis of a software package manager
with semantic versioning checking \cite{Hanus17ICLP}.

In the next section, we review the main concepts of
functional logic programming and Curry.
Section~\ref{sec:equiv} defines our notion of equivalence
which is used in Sect.~\ref{sec:refining}
to develop practically useful characterizations of equivalent operations.
Section~\ref{sec:currycheck} shows how to use these criteria
in a property-based testing tool.
This tool is evaluated in Sect.~\ref{sec:evaluation}
on various examples.
Section~\ref{sec:applications} describes two applications
of our notion of equivalence: consistency of package versions
and consistency of implementations w.r.t.{} their specifications.
Section~\ref{sec:verification} shows the verification
of an equivalence property where our results are helpful.
Section~\ref{sec:related} discusses some related work
before we conclude.

\section{Functional Logic Programming and Curry}
\label{sec:flp}

We briefly review those elements of functional logic languages
and Curry that are necessary to understand the contents of this paper.
More details can be found in surveys on
functional logic programming \cite{AntoyHanus10CACM,Hanus13}
and in the language report \cite{Hanus16Curry}.

Curry is a declarative multi-paradigm language
seamlessly combining features from functional and
logic programming.
The syntax of Curry is close to Haskell \cite{PeytonJones03Haskell}.
In addition to Haskell, Curry allows
\emph{free} (\emph{logic}) \emph{variables}
in conditions and right-hand sides of rules.
Thus, \emph{expressions} in Curry programs contain \emph{operations}
(defined functions), \emph{constructors} (introduced in data type declarations),
and \emph{variables} (arguments of operations or free variables).
Function calls with free variables are evaluated by a possibly
non-deterministic instantiation of demanded arguments
\cite{AntoyEchahedHanus00JACM}.
In contrast to Haskell, rules with overlapping left-hand sides
are non-deterministically (rather than sequentially) applied.

\begin{example}\label{ex:insert}
The following example shows the definition of
a non-deterministic list insertion operation in Curry:
\begin{curry}
insert :: a -> [a] -> [a]
insert x ys     = x : ys
insert x (y:ys) = y : insert x ys
\end{curry}
For instance, the expression \code{insert$\;$0$\;$[1,2]}
non-deterministically evaluates to one of the values
\code{[0,1,2]}, \code{[1,0,2]}, or \code{[1,2,0]}.
Based on this operation, we can easily define permutations:
\begin{curry}
perm :: [a] -> [a]
perm []     = []
perm (x:xs) = insert x (perm xs)
\end{curry}
Thus, \code{perm$\;$[1,2,3,4]} non-deterministically evaluates to
all 24 permutations of the input list.
\end{example}
Non-deterministic operations, which are
interpreted as mappings from values into sets of values \cite{GonzalezEtAl99},
are an important feature of contemporary functional logic languages.
Hence, Curry has also a predefined \emph{choice} operation:
\begin{curry}
x ? _  =  x
_ ? y  =  y
\end{curry}
Using non-deterministic operations as arguments could cause
a semantic ambiguity.
Consider the operations:
\begin{curry}
coin = 0 ? 1$\listline$
double x = x + x
\end{curry}
Standard term rewriting produces, among others, the derivation:
\begin{curry}
double coin  $\to~$  coin + coin
             $\to~$  (0 ? 1) + coin
             $\to~$  0 + coin
             $\to~$  0 + (0 ? 1)
             $\to~$  0 + 1
             $\to~$  1
\end{curry}
whose result is (presumably) unintended.
Therefore, Gonz\'alez-Moreno et al.\ \cite{GonzalezEtAl99} proposed
the rewriting logic CRWL as a logical foundation for declarative
programming with non-strict and non-deterministic operations.  This
logic specifies the \emph{call-time choice} semantics \cite{Hussmann92}
where values of the arguments of an operation are set,
though not computed, before the operation is evaluated.
In a lazy strategy, this is naturally obtained
by sharing.
For instance, the two occurrences of \code{coin} in the derivation
above are shared so that \ccode{double coin} has only two results:
\code{0} or \code{2}.
Since standard term rewriting does not conform to
the intended call-time choice semantics,
other notions of rewriting have been proposed to formalize this idea, such as
graph rewriting \cite{EchahedJanodet97GraphRewriting,EchahedJanodet98}
or let rewriting \cite{LopezRodriguezSanchez07}.
In this paper, we use a simple reduction relation,
denoted ``$\to$'',
that we sketch without giving all details (which can be found in
\cite{LopezRodriguezSanchez07}).

In the following, we ignore free (logic) variables
since they can be considered as syntactic sugar
for non-deterministic data generator operations
\cite{AntoyHanus06ICLP}.
Thus, a \emph{value} is an expression without operations or free variables.
To cover non-strict computations, expressions can also
contain the special symbol $\bot$ to represent
\emph{undefined or unevaluated values}.
A \emph{partial value} is a value containing occurrences of $\bot$.
A \emph{partial constructor substitution} is a substitution
that replaces each variable by some partial value.
A \emph{context} $\ctxt{\cdot}$ is an expression with some ``hole''.
Rewrite rules are of the form:
\begin{equation}
  \label{conditional-rule}
  l \mid c \to r
\end{equation}
where ``$\mid c\,$'' is an optional Boolean \emph{condition}.
Such a rule is applicable to an
expression $t$, iff $t$ is an instance of $l$
and the instantiated condition holds.
The rule of (\ref{conditional-rule})
can be replaced by the unconditional rules:
\begin{equation}
  \label{unconditional-rule}  
  \begin{array}{@{}l@{}}
    l \to \emph{ifthen}~ c~ r \\
    \emph{ifthen}~ \code{True}~ x \to x
  \end{array}
\end{equation}
because the values produced by (\ref{conditional-rule})
are all and only those produced by (\ref{unconditional-rule})
\cite{Antoy01PPDP}.
Thus, the reduction relation that we use
throughout this paper is defined as follows:
\begin{center}
\begin{tabular}{@{\quad}l@{\quad}r@{~~}c@{~~}l@{~~~}l}
 \rulename{Fun} &
 $\ctxt{f~\sigma(t_1) \ldots \sigma(t_n)}$ & $\to$ & $\ctxt{\sigma(r)}$ &
where  $f~t_1 \ldots t_n ~\code{=}~ r$ is a program rule\\
 & & & & and $\sigma$ a partial constructor substitution\\[2ex]
\rulename{Bot} & $\ctxt{e}$ & $\to$ & $\ctxt{\bot}$ &
where $e \ne \bot$
\end{tabular}
\end{center}
The first rule models the call-time choice semantics: when a rule is applied,
the actual arguments of the operation must have been evaluated to
partial values. The second rule models the lazy evaluation
(non-strictness) semantics by allowing
the evaluation of any subexpression to an undefined value
(which is intended if the value of this subexpression is not demanded).
As usual, $\tostar$ denotes the reflexive and transitive closure
of this reduction relation.
The equivalence of this rewrite relation and CRWL
is shown in \cite{LopezRodriguezSanchez07}.

\section{Equivalent Operations}
\label{sec:equiv}

As discussed above, equivalence of operations can be defined in
different ways. Ground equivalence and computed result equivalence
only compare the values of applications.
This is too weak since some operations have no finite values.
\begin{example}
\label{ex:ints12}
Consider the following operations that generate infinite lists
of numbers:
\begin{curry}
ints1 n = n : ints1 (n+1)       ints2 n = n : ints2 (n+2)
\end{curry}
Since these operations do not produce finite values,
we cannot detect any difference when comparing only computed results.
However, they behave differently when put into some context,
e.g., an operation that selects the second element of a list:
\begin{curry}
snd (x:y:zs) = y
\end{curry}
Now, \code{snd$\;$(ints1$\;$0)} and \code{snd$\;$(ints2$\;$0)}
evaluate to \code{1} and \code{2}, respectively.
\end{example}
Therefore, we do not consider these operations as equivalent.
This motivates the following notion of equivalence
for possibly non-terminating and non-deterministic operations.\footnote{%
The extension to operations with several arguments is straightforward.
For the sake of simplicity, we formally define our notions
only for unary operations.}
\begin{definition}[Equivalence]
\label{def:equivalence}
Let $f_1,f_2$ be operations of type $\tau \to \tau'$.
$f_1$ is \emph{equivalent} to $f_2$ iff, for any expression $E_1$,
$E_1 \tostar v$ iff $E_2 \tostar v$, where $v$ is a value
and $E_2$ is obtained from $E_1$ by replacing each occurrence of
$f_1$ with $f_2$.
\end{definition}
This notion of equivalence conforms with the usual notion
of contextual equivalence in programming languages
(e.g., see \cite{Pitts00} for a tutorial).
It was already proposed in \cite{AntoyHanus12PADL}
as the notion of equivalence for functional logic programs
and also defined in \cite{BacciEtAl12} as ``contextual equivalence''
for functional logic programs.

Thus, \code{ints1} and \code{ints2} are not equivalent.
Moreover, even terminating operations that always compute
same results might not be equivalent if put into some context.
\begin{example}\label{ex:permsort}
Consider the definition of lists sorted in ascending order:
\begin{curry}
sorted []       = True
sorted [_]      = True
sorted (x:y:zs) = x<=y && sorted (y:zs)
\end{curry}
We can use this definition and the definition of permutations above
to provide a precise specification of sorting a list
by computing some sorted permutation:
\begin{curry}
sort xs | sorted ys = ys    where ys = perm xs
\end{curry}
We might try to obtain an even more compact formulation
by defining the ``sorted'' property as an operation that
is the (partial) identity on sorted lists:
\begin{curry}
idSorted []              = []
idSorted [x]             = [x]
idSorted (x:y:zs) | x<=y = x : idSorted (y:zs)
\end{curry}
Then we can define another operation to sort a list by
composing \code{perm} and \code{idSorted}:
\begin{curry}
sort' xs = idSorted (perm xs)
\end{curry}
Although both \code{sort} and \code{sort'} compute sorted lists,
they might behave differently in some context.
For instance, suppose we want to compute the minimum of
a list by returning the head element of the sorted list:
\begin{curry}
head (x:xs) = x
\end{curry}
Then \code{head$\;$(sort$\;$[3,2,1])} returns \code{1}, as expected,
but \code{head$\;$(sort'$\;$[3,2,1])} returns \code{1} as well as \code{2}.
The latter unintended value is obtained by computing the
permutation \code{[2,3,1]} so that \code{head (idSorted [2,3,1])}
returns \code{2}, since the call to \code{idSorted$\;$[3,1]}
is not evaluated due to non-strictness.
\end{example}
This example shows that our strong notion of equivalence
is reasonable. However, testing this equivalence
might require the generation of arbitrary contexts.
Therefore, we show in the next section how to avoid this context
generation.

\section{Refined Equivalence Criteria}
\label{sec:refining}

The definition of equivalence as stated in
Def.~\ref{def:equivalence} covers the intuition that
equivalent operations can be interchanged at any place in an
expression without changing the expression's value.
Proving such a general form of equivalence could be difficult.
Therefore, we define another form of equivalence that
is based on a single operation that observes the computed results
of the operation under scrutiny.
\begin{definition}[Observable equivalence]
\label{def:obsequivalence}
Let $f_1,f_2$ be operations of type $\tau \to \tau'$.
$f_1$ is \emph{observably equivalent} to $f_2$ iff,
for all operations $g$ of type $\tau' \to \tau''$,
all expressions $e$ and values $v$,
$g~(f_1~e) \tostar v$ iff $g~(f_2~e) \tostar v$.
\end{definition}
We can expect that proving observable equivalence is
easier than equivalence since we trade a context made of an arbitrary expression
that has multiple occurrences of a function $f$ with a single
function call that has a single occurrence of $f$ in a fixed position.
Fortunately, the next theorem shows that proving
observable equivalence is sufficient in general.

We recall some less familiar notations used in the proof of the
next result.  A \emph{path} $p$ in an expression $e$
is a sequence of positive integers that identifies
a \emph{subexpression} of $e$, denoted $e|_p$.
The subexpression is inductively defined as follows.
If $p=[\,]$, i.e., the path is empty, then $e|_p = e$, i.e., $e$ itself.
Otherwise, the following conditions must hold:
$e$ must be of the form $f\,t_1 \ldots t_n$,
for some symbol $f$ of arity $n$ and expressions $t_1, \ldots t_n$,
and $p$ must be of the form $i:is$, where $1 \leq i \leq n$
and $is$ is a path in $t_i$.
In this case, we define $e|_{i:is} = t_i|_{is}$.
We further build of the above concepts.
Let $e$ and $t$ be expressions and $p$ a path in $e$.
The notation $e[t]_p$ denotes an expression, $e'$, that is equal
to $e$ except at position $p$, where $e'|_p = t$.
In other words, $e'$ is obtained from $e$ by replacing its subexpression
at $p$ with $t$.
\begin{theorem}
\label{theo:single-context-equivalence}
Let $f_1,f_2$ be operations of type $\tau \to \tau'$.
$f_1$ and $f_2$ are equivalent iff they are observably equivalent.
\end{theorem}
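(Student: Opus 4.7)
The plan is to handle the two implications separately. The $(\Rightarrow)$ direction is a straightforward specialization: observable equivalence only tests expressions of the particular shape $g\,(f_i\,e)$, which is subsumed by Def.~\ref{def:equivalence} applied to the context $E_1 = g\,(f_1\,e)$. The only subtlety is that Def.~\ref{def:equivalence} replaces \emph{every} textual occurrence of $f_1$, including any inside $e$; this is handled by applying the definition once to $g\,(f_1\,e)$ and once to $g\,(f_2\,e)$ and composing the two resulting biconditionals, so that the $f_1$-occurrences inside $e$ cancel out.

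For the harder $(\Leftarrow)$ direction I would proceed by induction on the number $n$ of occurrences of $f_1$ in $E_1$. The base case $n=0$ is trivial since $E_1 = E_2$. For the inductive step, pick one occurrence of $f_1$ in $E_1$, say at position $p$, and write $E_1 = C[f_1\,e]$ for the single-hole context $C[\cdot]$ obtained by excising $f_1\,e$ at $p$. Form $E_1' = C[f_2\,e]$, which has $n-1$ occurrences of $f_1$. By the induction hypothesis $E_1'\tostar v$ iff $E_2\tostar v$, so it suffices to prove $C[f_1\,e]\tostar v$ iff $C[f_2\,e]\tostar v$ for every value $v$; this is where the jump from an arbitrary single-hole context to an observation of the form $g\,(f_i\,e)$ must be bridged.

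To bridge this gap, I introduce a fresh operation $g$ of the appropriate type together with the rule $g\,x = C[x]$ (adding this rule does not alter the reductions of expressions that do not mention $g$). I then establish the auxiliary claim that for any expression $e'$, $C[e']\tostar v$ iff $g\,(e')\tostar v$. Given this claim, Def.~\ref{def:obsequivalence} instantiated at $g$, $e$, and $v$ yields $g\,(f_1\,e)\tostar v$ iff $g\,(f_2\,e)\tostar v$, and composing with the auxiliary claim on both sides delivers $C[f_1\,e]\tostar v$ iff $C[f_2\,e]\tostar v$, completing the induction.

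The main obstacle is the auxiliary claim. One direction is easy: any derivation $g\,(e')\tostar v$ must first reduce $e'$ to a partial value $w$ (possibly $\bot$ via \rulename{Bot}) so that \rulename{Fun} can fire and yield $C[w]$, and the continuation $C[w]\tostar v$ is equally a derivation of $C[e']$. The reverse direction asks that any reduction $C[e']\tostar v$ can be reorganised so that all steps inside the hole precede all steps above it; this permutation is legitimate because \rulename{Fun} requires its arguments to be partial values, so no reduction above the hole can fire until the content of the hole has already been reduced to partial-value form, and because $C[\cdot]$ has \emph{exactly one} hole, no sharing is either introduced or destroyed by routing the subexpression through the fresh $g$. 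This is the single-use let-floating property of call-time-choice rewriting, which is a standard consequence of the let-rewriting formalisation of~\cite{LopezRodriguezSanchez07}.
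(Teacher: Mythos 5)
Your proof is correct and rests on the same key idea as the paper's -- induction on the number of occurrences of $f_1$, with the inductive step handled by defining a fresh operation $g$ whose right-hand side is the surrounding context and then invoking observable equivalence -- but your decomposition of that step differs in ways worth noting. The paper must choose an \emph{innermost} occurrence of $f_1$ (so that the argument $e$ is $f_1$-free): its inductive step extracts partial values $t_1,t_2$ with $f_i\,e \tostar t_i$, applies the induction hypothesis to $E_1[t_2]_p$, and reassembles $E_2 = E_2[f_2\,e]_p \tostar E_2[t_2]_p \tostar v$, which only works if replacing $f_1$ by $f_2$ in $E_1$ leaves $e$ untouched. You instead replace a single \emph{arbitrary} occurrence, prove the clean biconditional $C[f_1\,e]\tostar v \Leftrightarrow C[f_2\,e]\tostar v$, and chain it with the induction hypothesis applied to $C[f_2\,e]$; this needs no innermost choice and no threading of $t_1,t_2$ through the induction, which is arguably tidier. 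You also make explicit, as your auxiliary bridge claim, the compositionality property ($C[e']\tostar v$ iff $e'\tostar w$ and $C[w]\tostar v$ for some partial value $w$) that the paper silently invokes twice under the phrase ``by definition of $\tostar$'', and your care in the $(\Rightarrow)$ direction about $f_1$-occurrences inside $e$ addresses a point the paper glosses as trivial. One small repair to your justification of the bridge's hard direction: it is not true that no step above the hole can fire before the hole holds a partial value, since the \rulename{Bot} rule can collapse a superterm of the hole at any time regardless of the hole's content. This does not break the claim -- such a step is simulated by first collapsing the hole content itself to $\bot$ (a partial value) via \rulename{Bot} and then performing the outer collapse; only \rulename{Fun} steps above the hole genuinely demand that the hole already contain a partial value. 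With that caveat your permutation argument, and hence the whole proof, goes through at a level of rigor at least matching the paper's.
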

\begin{proof}
It is trivial that equivalence implies observable equivalence.
Hence, we assume that $f_1$ are $f_2$ are observably
equivalent, i.e., for all operations $g$ of type $\tau' \to \tau''$,
all expressions $e$ and values $v$,
$g~(f_1~e) \tostar v$ iff $g~(f_2~e) \tostar v$.
We show by induction on the number $n$ of occurrences of the symbol $f_1$
the following claim:
\begin{quote}
If $E_1$ is an expression with $n$ occurrences of $f_1$,
$E_2$ is obtained from $E_1$ by replacing each occurrence of
$f_1$ with $f_2$, and $v$ is a value,
then $E_1 \tostar v$ iff $E_2 \tostar v$.
\end{quote}
Base case ($n=0$): Since $E_1$ contains no occurrence of $f_1$,
$E_2 = E_1$ and the claim is trivially satisfied.

Inductive case ($n>0$): Assume the claim holds for $n-1$
and $E_1$ contains $n$ occurrences of $f_1$ and $E_1 \tostar v$ 
for some value $v$. We have to show that $E_2 \tostar v$
(the opposite direction is symmetric) where $E_2$ is obtained
from $E_1$ by replacing each occurrence of $f_1$ with $f_2$.
Let $p$ be a position in $E_1$ with $E_1|_p = f_1~e$ and $e$ does not contain
any occurrence of $f_1$.
Since $E_1 \tostar v$, by definition of $\tostar$,
there is a partial value $t_1$ with $f_1~e \tostar t_1$
and $E_1[t_1]_p \tostar v$.
We define a new operation $g$ by:
\begin{curry}
$g$ $x$ = $E_1[x]_p$
\end{curry}
where $x$ is a new variable that does not occur in $E_1$.
Hence $g~(f_1~e) \tostar g~t_1 \to E_1[t_1]_p \tostar v$.
Our assumption implies $g~(f_2~e) \tostar v$.
By definition of $\tostar$,
there is a partial value $t_2$ with
$g~(f_2~e) \tostar g~t_2 \to E_1[t_2]_p \tostar v$.
Since $E_1[t_2]_p$ contains $n-1$ occurrences of $f_1$,
the induction hypothesis implies that $E_2[t_2]_p \tostar v$.
Therefore, $E_2 = E_2[f_2~e]_p \tostar E_2[t_2]_p \tostar v$.
\end{proof}
A proof that two operations are observably equivalent
could still be difficult since we have to take all
possible functions of a program into account.
However, the next result shows that it is sufficient
to verify that two operations yield always the same partial values
on identical inputs.

\begin{theorem}
\label{theo:equal-partial-value-equivalence}
Let $f_1,f_2$ be operations of type $\tau \to \tau'$.
If, for all expressions $e$ and partial values $t$,
$f_1~e \tostar t$ iff $f_2~e \tostar t$,
then $f_1$ and $f_2$ are equivalent.
\end{theorem}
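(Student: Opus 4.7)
The plan is to derive this result directly from Theorem~\ref{theo:single-context-equivalence}. By that theorem, it suffices to show that the hypothesis implies \emph{observable} equivalence of $f_1$ and $f_2$. So I fix an operation $g$ of type $\tau' \to \tau''$, an expression $e$, and a value $v$, and I try to show $g~(f_1~e) \tostar v$ iff $g~(f_2~e) \tostar v$. Since the hypothesis is symmetric in $f_1$ and $f_2$, I only need one direction, say the forward one; I assume $g~(f_1~e) \tostar v$ and aim at $g~(f_2~e) \tostar v$.

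The central step is a decomposition of the given derivation, analogous to the one already used in the proof of Theorem~\ref{theo:single-context-equivalence}. By the definition of $\tostar$ (in particular, the fact that the call-time choice rule \rulename{Fun} requires its arguments to already be partial values at the moment a program rule is fired, and that \rulename{Bot} can always collapse the argument to $\bot$), there exists a partial value $t$ such that
\[
  f_1~e \tostar t \qquad \text{and} \qquad g~t \tostar v.
\]
Intuitively, $t$ is the partial value to which the designated occurrence of $f_1~e$ has been reduced at the first step of the derivation that either (i) fires a rule of $g$ at the root, or (ii) applies \rulename{Bot} to the whole argument, in which case $t = \bot$; if neither ever happens then the whole derivation takes place inside the argument and one takes $t = v$, which is a partial value.

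Once the decomposition is in place, the rest is immediate. Applying the hypothesis to $f_1~e \tostar t$ (valid because $t$ is a partial value) yields $f_2~e \tostar t$. Reducing inside the argument of $g$ and then continuing with the second half of the decomposition gives
\[
  g~(f_2~e) \tostar g~t \tostar v,
\]
which is what we wanted. The backward direction is identical with the roles of $f_1$ and $f_2$ swapped.

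The only delicate point, and hence the main obstacle, is justifying the decomposition: one must argue that any derivation $g~(f_1~e) \tostar v$ can be rearranged so that the subterm $f_1~e$ is reduced to some partial value $t$ before the outer context begins to depend on it. This, however, is precisely the property of the rewrite relation that the proof of Theorem~\ref{theo:single-context-equivalence} already invokes (via the phrase ``by definition of $\tostar$, there is a partial value $t_1$ with $f_1~e \tostar t_1$ and $E_1[t_1]_p \tostar v$''), and it is a standard consequence of the call-time choice semantics of \cite{LopezRodriguezSanchez07}. So in the write-up I would simply cite that property in the same style as the preceding proof does.
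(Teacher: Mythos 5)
Your proposal is correct and follows essentially the same route as the paper: reduce to observable equivalence via Theorem~\ref{theo:single-context-equivalence}, decompose the derivation $g~(f_1~e) \tostar v$ by definition of $\tostar$ into $f_1~e \tostar t$ and $g~t \tostar v$ for some partial value $t$, apply the hypothesis to get $f_2~e \tostar t$, and conclude $g~(f_2~e) \tostar g~t \tostar v$, with the converse by symmetry. Your extra discussion of how to justify the decomposition is more detailed than the paper, which simply invokes the definition of $\tostar$ (note only that your case of a derivation staying entirely inside the argument is vacuous, since $g~(f_1~e)$ is operation-rooted and a value contains no operation symbols).
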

\begin{proof}
By Theorem~\ref{theo:single-context-equivalence} it is
sufficient to show the observable equivalence of $f_1$ and $f_2$.
Hence, let $g$ be an operation of type $\tau' \to \tau''$,
$e$ an expression and $v$ a value with $g~(f_1~e) \tostar v$.
We have to show that $g~(f_2~e) \tostar v$ (the other direction is symmetric).
By definition of $\tostar$,
there is some partial value $t$ with $f_1~e \tostar t$ and $g~t \tostar v$.
By the assumption of the theorem, $f_2~e \tostar t$.
Hence, $g~(f_2~e) \tostar g~t \tostar v$.
\end{proof}
Note that considering \emph{partial} result values, as opposed to
result values, is essential to establish equivalence.
For instance, the operations \code{sort} and \code{sort'}
defined in Sect.~\ref{sec:equiv} compute the same \emph{value}
for the same input.  But they compute different partial
values for some input.  For example,
$\code{sort' [2,3,1]} \tostar \code{2:}\bot$
whereas \code{sort [2,3,1]} cannot be derived to \code{2:$\bot$}.
If we limit our consideration to result values only,
we could not determine that
\code{sort} and \code{sort'} are not equivalent.

The following result is the converse of
Theorem~\ref{theo:equal-partial-value-equivalence}.
It shows that not only having the same partial values is a sufficient
condition for the equivalence of function, but also a necessary condition.
For partial values $t$ and $u$, we write $t < u$ iff
$t$ is obtained by one or more applications of the \rulename{Bot}
rule to $u$.
It follows that if $u$ is a partial value of an expression $e$,
then any $t < u$ is also a partial value of $e$.
\begin{theorem}
\label{theo:converse-equal-partial-value-equivalence}
Let $f_1,f_2$ be operations of type $\tau \to \tau'$.
If, for some expression $e$, the partial values of
$f_1~e$ differ from those of $f_2~e$,
then $f_1$ and $f_2$ are not equivalent.
\end{theorem}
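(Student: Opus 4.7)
The plan is to prove the contrapositive through observable equivalence: I will construct a single observing operation $g$ for which $g~(f_1~e)$ and $g~(f_2~e)$ produce different values, so that $f_1$ and $f_2$ fail to be observably equivalent and, by Theorem~\ref{theo:single-context-equivalence}, fail to be equivalent. Without loss of generality, assume $t$ is a partial value of $f_1~e$ that is not a partial value of $f_2~e$. Because rule~\rulename{Bot} makes $\bot$ a partial value of every expression, $t$ must differ from $\bot$, so $t$ has a constructor at its root.

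The key construction is the discriminating observer. Let $\hat t$ denote the linear constructor pattern obtained from $t$ by replacing every occurrence of $\bot$ with a distinct fresh variable, and introduce a fresh operation $g \colon \tau' \to \code{Bool}$ defined by the single rule $g~\hat t = \code{True}$. The easy direction is then immediate: since $t$ equals $\hat t$ under the substitution $\sigma_\bot$ sending each fresh variable to $\bot$, rule~\rulename{Fun} yields $g~(f_1~e) \tostar g~t \to \code{True}$.

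The main obstacle is to show that no reduction of $g~(f_2~e)$ produces $\code{True}$. My plan is to track the root symbol of the expression along any hypothetical reduction $g~(f_2~e) \tostar \code{True}$: it begins as $g$ and must end as the nullary constructor $\code{True}$; reductions strictly below the root do not alter it, while rule~\rulename{Bot} applied at the root yields the dead end $\bot$. Hence the outer $g$ must eventually be contracted by rule~\rulename{Fun}; because $g$ is fresh and has only one defining rule, this step requires the argument $f_2~e$ to have already reduced to an instance $\sigma(\hat t)$ of $\hat t$ for some partial constructor substitution $\sigma$. This forces $f_2~e \tostar \sigma(\hat t)$, and applying rule~\rulename{Bot} at each position where $\sigma$ filled a fresh variable yields $\sigma(\hat t) \tostar t$, so $f_2~e \tostar t$, contradicting the assumption on $t$.

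Combining the two directions, $g~(f_1~e) \tostar \code{True}$ whereas no reduction of $g~(f_2~e)$ produces $\code{True}$, so $f_1$ and $f_2$ are not observably equivalent and the theorem follows. The delicate point is the root-symbol tracking step above, whose correctness relies essentially on $g$ being introduced as fresh with a unique rule: this is what guarantees that the only way for the root symbol to change from $g$ to $\code{True}$ is via the rule $g~\hat t = \code{True}$, which in turn demands the argument to expose the shape of $\hat t$.
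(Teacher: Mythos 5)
Your proposal is correct and follows essentially the same route as the paper's proof: both construct a fresh observer $g$ with the single rule whose left-hand side is $t$ with each $\bot$ replaced by a fresh variable, derive $g~(f_1~e) \tostar \code{True}$ (the paper uses $0$), and refute $g~(f_2~e) \tostar \code{True}$ by arguing that firing $g$'s rule would force $f_2~e$ to reduce to an instance of the pattern, hence to $t$ itself via \rulename{Bot} steps, contradicting $t \notin T_2$. Your version merely makes explicit two points the paper leaves implicit --- the root-symbol tracking that justifies why $g$'s rule must eventually fire, and the observation that $t \neq \bot$ --- which are welcome refinements rather than a different argument.
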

\begin{proof}
We construct a function $g$ that, under the statement hypothesis,
witnesses the non-equivalence of $f_1$ and $f_2$.
Let $T_1$ be the set of partial values of $f_1~e$
and $T_2$ the set of partial values of $f_2~e$.
W.l.o.g., we assume that there exists some partial value $t \in T_1$
such that $t \not \in T_2$.
Let $g$ be defined by the single rule:
\begin{displaymath}
  g~\bar t \to 0
\end{displaymath}
where $\bar t$ is the expression
obtained from $t$ by replacing
any instance of $\bot$ with a fresh variable.
Then, $g~(f_1~e) \tostar g~t \to 0$,
whereas we show that $g~(f_2~e) \not \tostar 0$.
Suppose the contrary.  Then, it must be that $f_2~e \tostar u$ with
$u$ is an instance of $\bar t$.  This implies $t < u$,
which in turn implies $t \in T_2$.
\end{proof}
The next corollary is useful to avoid the consideration of
all argument expressions in equivalence proofs.
\begin{corollary}
\label{cor:equal-partial-value-equivalence}
Let $f_1,f_2$ be operations of type $\tau \to \tau'$.
If, for all partial values $t$ and $t'$,
$f_1~t \tostar t'$ iff $f_2~t \tostar t'$,
then $f_1$ and $f_2$ are equivalent.
\end{corollary}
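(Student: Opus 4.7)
The plan is to derive the corollary from Theorem~\ref{theo:equal-partial-value-equivalence}, by lifting the hypothesis --- which restricts argument inputs to partial values --- to all argument expressions. Concretely, I will show that for every expression $e$ and every partial value $t'$, $f_1~e \tostar t'$ implies $f_2~e \tostar t'$ (and then invoke symmetry). Once this is established, the hypothesis of Theorem~\ref{theo:equal-partial-value-equivalence} is satisfied and equivalence follows.

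To bridge from arbitrary $e$ to partial values, I will inspect a fixed reduction $f_1~e \tostar t'$ and factor it according to when the outermost occurrence of $f_1$ is eliminated. Since $t'$ is a partial value and therefore contains no operation symbols, the root $f_1$ must be consumed somewhere along this derivation, either by \rulename{Bot} at the root or by \rulename{Fun} at the root. In the \rulename{Bot} case, $t' = \bot$ and a single \rulename{Bot} step also sends $f_2~e$ to $\bot$. In the \rulename{Fun} case, every step preceding this root step occurs strictly inside the argument position, so by context-closure of $\to$ they yield a derivation $e \tostar e'$ where $e' = \sigma(t_1)$ for the matched rule $f_1~t_1 = r$ and partial constructor substitution $\sigma$. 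By the shape of left-hand sides and the definition of partial constructor substitution, $e'$ is a partial value. The remaining steps give $f_1~e' \to \sigma(r) \tostar t'$, so in particular $f_1~e' \tostar t'$.

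The hypothesis of the corollary, applied to the partial value $e'$, now yields $f_2~e' \tostar t'$. Combined with $e \tostar e'$ lifted through the surrounding context $f_2~\ldots$, this produces $f_2~e \tostar f_2~e' \tostar t'$, as desired. The symmetric direction is identical, so Theorem~\ref{theo:equal-partial-value-equivalence} delivers the equivalence of $f_1$ and $f_2$.

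The main obstacle I anticipate is the factorization step --- isolating the first rewrite that acts on the root of $f_1~e$ and justifying that all prior steps happen strictly inside the argument. This requires unfolding the context-closure implicit in the \rulename{Fun} and \rulename{Bot} schemas: any step at a non-empty position in $f_1~e_i$ is exactly a step in $e_i$ lifted through the single-argument context formed by $f_1$. Once this bookkeeping is done, the remainder is a mechanical application of the corollary's hypothesis and Theorem~\ref{theo:equal-partial-value-equivalence}.
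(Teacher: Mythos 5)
Your proposal is correct and follows essentially the same route as the paper: the paper likewise reduces the corollary to Theorem~\ref{theo:equal-partial-value-equivalence} by factoring any derivation $f_1~e \tostar t_1$ through a partial value $t_0$ with $e \tostar t_0$ and $f_1~t_0 \tostar t_1$, then applying the hypothesis and context-closure to get $f_2~e \tostar f_2~t_0 \tostar t_1$. The only difference is that the paper asserts this factorization in one line (``by definition of $\tostar$''), whereas you justify it explicitly via the first root step (\rulename{Fun} with argument $\sigma(t_1)$ a partial value, or \rulename{Bot} giving $t'=\bot$), which is a sound elaboration of the same argument rather than a different approach.
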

\begin{proof}
Assume that $f_1~t \tostar t'$ iff $f_2~t \tostar t'$ holds
for all partial values $t$ and $t'$.
Consider an expression $e$ and a partial value $t_1$
such that $f_1~e \tostar t_1$.
By definition of $\tostar$, there is a partial value $t_0$
with $e \tostar t_0$ and $f_1~t_0 \tostar t_1$.
Our assumption implies $f_2~t_0 \tostar t_1$.
Hence $f_2~e \tostar f_2~t_0 \tostar t_1$.
Since the other direction is symmetric,
Theorem~\ref{theo:equal-partial-value-equivalence}
implies the equivalence of $f_1$ and $f_2$.
\end{proof}
Hence, we have a sufficient criterion for equivalence
checking which does not require the enumeration of arbitrary contexts.
Instead, it is sufficient to test the equivalence on all partial values.
Such a test can be performed by property-based test tools,
as shown in the next section.

One may wonder whether the consideration of values
instead of partial values is enough for equivalence checking.
The next example shows that the answer is negative.
\begin{example}
\label{ex:non-values}
Consider the following operations that take and return Booleans:
\begin{curry}
f1 True  = True                           f2 _ = True
f1 False = True
\end{curry}
Functions \code{f1} and \code{f2} behave identically
on every input \emph{value}.
However, \code{f1$\;\bot$} has no value, whereas \code{f2$\;\bot$} has
value \code{True}.  Thus, values as arguments are not as discriminating
as partial values to expose a difference in behavior, whereas
partial values are as discriminating as expressions.
Actually, \code{f1} and \code{f2} are not equivalent:
consider the operation \code{failed} which has no value.\footnote{%
A possible definition is: \fcode{failed = head []}}
Then \code{f2$\;$failed} has value \code{True} whereas
\code{f1$\;$failed} has no value.
\end{example}
Corollary~\ref{cor:equal-partial-value-equivalence} requires
to compare all partial result values and not just computed results.
The former is more laborious since an expression
might evaluate to many partial values even if it has a single
value. For instance, consider the list generator:
\begin{curry}
fromTo m n = if m>n then [] else m : fromTo (m+1) n
\end{curry}
The expression \code{fromTo 1 5} evaluates to the single value
\code{[1,2,3,4,5]}.
According to the reduction relation defined in
Sect.~\ref{sec:flp},
the same expression reduces to the partial values
\code{$\bot$},
\code{$\bot$:$\bot$},
\code{1:$\bot$},
\code{$\bot$:$\bot$:$\bot$},
\code{1:$\bot$:$\bot$},
\code{$\bot$:2:$\bot$},
\code{1:2:$\bot$},~\ldots~
If operations are non-terminating, it is necessary to consider partial result
values in general.
For instance, \code{ints1$\;$0} and \code{ints2$\;$0}
do not evaluate to a value but they evaluate to the
different partial values \code{0:1:$\bot$} and \code{0:2:$\bot$},
respectively, which shows the non-equivalence of
\code{ints1} and \code{ints2} by
Cor.~\ref{cor:equal-partial-value-equivalence}.
Thus, one may wonder whether for ``well behaved'' operations
it suffices to consider only result \emph{values}.
This would save some effort in the property-based checking
approach described in the next section.
The good behavior could be captured by the property that a
function returns a value for any argument value,
see Def.~\ref{def:well-behaved}.
Unfortunately, the answer is negative as
Example \ref{ex:well-behaved} will show.

\begin{definition}[Terminating, totally defined]
\label{def:well-behaved}
Let $f$ be an operation of type $\tau \to \tau'$.
$f$ is \emph{terminating} iff, for all values $t$ of type $\tau$,
any rewrite sequence $f~t \to t_1 \to t_2 \to\cdots$ is finite.
$f$ is \emph{totally defined} iff, for any value $t$ of type $\tau$,
$f~t$ evaluates to a value $v$ of type $\tau'$.
\end{definition}

\begin{example}
\label{ex:well-behaved}
Functions \code{h1} and \code{h2}, defined below,
are totally defined and terminating.
For any Boolean value $t$, \code{h1\,$t$} and \code{h2\,$t$}
produce the same value result, namely \code{Just\,$t$}.  However, 
\code{h1} and \code{h2} are not observably equivalent when applied to
some partial value, e.g., \code{failed}, as witnessed by \code{g}:
\begin{curry}
h1 True  = Just True                    h2 x = Just x
h1 False = Just False                   g (Just _) = 0
\end{curry}
\end{example}
\removelastskip
This example shows that we have to use partial input values
for equivalence tests
even when all the involved operations are terminating and totally defined.
This requirement was already suggested by Example~\ref{ex:non-values}
in which the operations were terminating and totally defined.

The following discussion further investigate a condition of
good behavior that affects the equivalence of two functions.
Since the condition is strong, its practical applicability is limited.
However, this result sheds light on the relationships
between equivalence and other properties of functions.
\begin{definition}
  We say that a function $f$ is \emph{deterministically defined}
  iff any redex rooted by $f$ has only one reduct.
\end{definition}
For example, operation \code{insert} of Example \ref{ex:insert} is
not deterministically defined.
The redex \code{insert\,0\,[1]} has two distinct reducts,
\code{[0,1]} and \code{[1:insert\,0\,[]]}.
By contrast, both operations \code{f1} and \code{f2}
of Example \ref{ex:non-values} are deterministically defined.

The next lemma investigates properties of a program
in which every operation is both totally
and deterministically defined.
Therefore, we must exclude the $\bot$ symbol
from the signature and rule \rulename{Bot} from the program,
since they would violate the assumptions.
In fact, if we reduce a \rulename{Fun}-rule redex with
the \rulename{Bot} rule, the same redex has two reducts,
hence we lose the determinism of definitions.
Likewise, if we allow the $\bot$ symbol, then
the $\bot$ expression is neither a value nor can be reduced to a value,
hence we lose the totality of definitions. 
Eliminating these elements from the following discussion
is not a problem.
Both the $\bot$ symbol and the \rulename{Bot} rule
are convenient abstractions to \emph{reason} about non-strict computations
by ignoring subexpressions,
but the next lemma shows that this is not necessary
for totally and deterministically defined operations
since their evaluation always ends in a unique value.
\begin{lemma}
  \label{unique-and-finite}
  Let $P$ be a program in which every operation is both totally
  and deterministically defined.
  Then, for every expression $e$ over the signature of $P$:
  (1) there exists a unique value $v$ such that $e \tostar v$,
  and (2) there exists no infinite computation of $e$.
\end{lemma}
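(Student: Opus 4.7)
The plan is to prove both parts by introducing a deterministic big-step evaluation judgment $\Downarrow$ and reducing the lemma to the existence and uniqueness of big-step values. The key initial observation is that, without $\bot$, the \rulename{Fun} rule can fire on $f(e_1,\dots,e_n)$ only when each $e_i$ is already a value: the required substitution $\sigma$ is a \emph{partial constructor substitution}, and a partial value with no occurrence of $\bot$ is just a value. So the reduction is effectively call-by-value, and any redex sits at a position whose pattern arguments are fully evaluated.

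I would then define $\Downarrow$ with the natural rules: values evaluate to themselves, constructors evaluate pointwise, and $f(e_1,\dots,e_n) \Downarrow v$ holds when each $e_i \Downarrow v_i$, the (unique) rule $f\,t_1\cdots t_n = r$ matches with $\sigma(t_i) = v_i$, and $\sigma(r) \Downarrow v$. Determinism of each operation makes $\Downarrow$ a partial function. Next, by mutual simulation, show that $e \Downarrow v$ iff $e \tostar v$ for a value $v$: the forward direction unfolds a $\Downarrow$-derivation into a small-step sequence; the backward direction is a subject-expansion lemma (if $e \to e'$ and $e' \Downarrow v$ then $e \Downarrow v$) proved by induction on the context of the step, using the call-by-value observation to analyze where a \rulename{Fun} firing can occur. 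Totality of $\Downarrow$ then follows by structural induction on $e$: the function-call case invokes totality of $f$ to obtain a witness $f(v_1,\dots,v_n) \tostar v$, which the backward simulation lifts to $f(v_1,\dots,v_n) \Downarrow v$. Combined, these yield conclusion~(1): every $e$ has a unique value.

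For conclusion~(2), assign each $e$ the size of its (now unique) $\Downarrow$-derivation and show that a single reduction step strictly decreases this cost. An internal step shrinks the corresponding sub-derivation by induction, while an outermost \rulename{Fun} step on $f(v_1,\dots,v_n)\to\sigma(r)$ replaces the entire function-call sub-derivation by its final premise $\sigma(r) \Downarrow v$, which is strictly smaller. Any infinite reduction chain would then produce an infinitely descending sequence of natural numbers, a contradiction. I expect the subject-expansion direction of the simulation to be the main obstacle, since it requires propagating big-step derivations backwards through arbitrary contexts; the call-by-value observation is what keeps this tractable by ruling out reductions that could change the value of an unevaluated argument sitting in a pattern position.
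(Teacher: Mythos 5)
Your plan is correct, and it takes a genuinely different route from the paper. The paper stays entirely within the small-step relation: existence of a value is proved by induction on the number of operation occurrences (using totality), uniqueness by observing that the program is almost orthogonal---constructor discipline plus deterministic definitions yield only trivial critical pairs---so the Parallel Moves Lemma applies, and termination by a well-founded order comparing expression size and the length of a chosen computation to the value, again resolved via parallel moves. You instead exploit the structural fact you identify at the outset---with $\bot$ and \rulename{Bot} removed, a \rulename{Fun}-redex $f~\sigma(t_1)\ldots\sigma(t_n)$ necessarily has value arguments, so the relation is effectively call-by-value---to set up a big-step judgment $\Downarrow$, prove $e \Downarrow v$ iff $e \tostar v$ by subject expansion through contexts, obtain uniqueness from functionality of $\Downarrow$ plus totality via structural induction, and obtain termination from a derivation-size measure that every step strictly decreases. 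What your approach buys: it is self-contained, needing no orthogonality or residual theory, and your measure neatly sidesteps the paper's delicate bookkeeping about residuals (its stipulation that multiple steps reducing distinct residuals of a redex count as one). What the paper's approach buys: brevity, by citing standard rewriting results, and no auxiliary judgment to define and relate. Two small repairs to your write-up: deterministic definedness guarantees a unique \emph{reduct} of a redex, not a unique matching \emph{rule}---overlapping rules whose instantiated right-hand sides coincide are permitted---so your big-step function rule should allow any matching program rule, with determinism then making $\Downarrow$ a partial function, exactly as you invoke a sentence later; and your decreasing-measure argument tacitly needs subject reduction (a step preserves the big-step value, so the smaller derivation evaluates $e'$ to the \emph{same} $v$, keeping the surrounding premises intact), which follows by the same induction on the position of the contracted redex that you already use for the size decrease.
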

\begin{proof}
  The existence of $v$ is proved by induction on the number $n$ of
  occurrences of operation symbols in $e$.  For the base case, $n=0$,
  the claim is   witnessed by $v=e$.
  For the inductive case, let $t$ be an outermost
  operation-rooted subexpression of $e$.  By the induction hypothesis,
  every operation-rooted proper subexpression of $t$ is reducible to a value.
  Hence, there exists an expression $t'$ such that $t \tostar t'$ and the
  only occurrence of an operation in $t'$ is the root.
  By the assumption of total definition, there exists some value $v_t$
  such that $t' \tostar v_t$. Hence, there exists some value $v$ such that
  $e \tostar v$.

  We now prove the uniqueness of $v$.  Observe that program $P$ is almost
  orthogonal \cite[p.52]{Ohlebusch02AdvancedTR}.
  In fact, the assumption that $P$ is a functional logic program
  implies that $P$ follows the constructor discipline, hence
  rule left-hand sides can overlap only at the root,
  and the assumption that functions
  are deterministically defined implies that critical pairs,
  if any, are trivial.
  Since $P$ is almost orthogonal, the Parallel Moves Lemma holds for
  the computations of $P$ \cite[p. 56]{Ohlebusch02AdvancedTR}.
  Thus, let $e \tostar v$ be a computation of $e$, where $v$ is some value, 
  and let $e=e_0 \to e_1 \to \ldots e_n = v'$ be some other computation of
  $e$, where $v'$ is some value as well.  By parallel moves, for any
  $i=0,\ldots n$, there is a computation $e_i \tostar v$, in
  particular $e_n \tostar v$.  But since $e_n = v'$ is a value, it
  must be $v'=v$.

  We now prove the termination of $P$.  Given an
  expression $t$, we have shown that there exists a value $v_t$
  and a computation $A_t: t \tostar v_t$, where $A_t$ is an arbitrary
  identifier of the computation.  We denote $size(t)$ the number
  of symbol occurrences of $t$ and $length(A_t)$ the number of steps of $A_t$
  with the stipulation that multiple steps, if any, to reduce distinct
  residuals of some redex contribute only a single step to the length.
  We define a well-founded order, ``$<$'', on the
  expressions over the signature of $P$.
  Given expressions $t$ and $u$, we define $t < u$ iff
  $size(t) < size(u)$ or $size(t) = size(u)$ and $length(A_t) < length(A_u)$.
  Let $e$ be any expression over the signature of $P$.
  W.l.o.g.~we assume that $e$ is operation-rooted.
  We have shown that there exists a value $v_e$ such that $A_e : e \tostar v_e$.
  We show by induction on ``$<$'' that any computation of $e$ is finite.
  By contradiction, suppose there exists a non-terminating computation
  $B: e = e_0 \to e_1 \to \ldots$~
  The case when $length(A_e) = 0$ is immediate,
  since $e$ is a value and $B$ cannot exist.
  If, for some $i>0$, the step $e_{i-1} \to e_i$ contracts a redex
  contracted in $A_e$, then by parallel moves,
  there is a computation $e_i \tostar v_e$ shorter than $length(A_e)$
  and by the induction hypothesis any computation of $e_i$ is finite,
  which contradicts the existence of $B$.
  Thus, all the steps of $B$ must be within some proper subexpression, say $t$,
  of $e$.  Since $size(t) < size(e)$,
  by the induction hypothesis any computation of $t$ is finite,
  which again contradicts the existence of $B$.
\end{proof}
\begin{theorem}
  \label{ground-implies}
  Let $P$ be a program in which every operation is both totally
  and deterministically defined,
  and let $f_1,f_2$ be operations of $P$ of type $\tau \to \tau'$.
  If $f_1$ and $f_2$ are ground equivalent,
  then $f_1$ and $f_2$ are equivalent.
\end{theorem}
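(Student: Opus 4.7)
\medskip\noindent\textbf{Proof plan.} I would reduce to observable equivalence via Theorem~\ref{theo:single-context-equivalence} and then use Lemma~\ref{unique-and-finite} so that ground equivalence exactly supplies the required comparison on a genuine ground value.

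In detail, it suffices by Theorem~\ref{theo:single-context-equivalence} to show that $f_1$ and $f_2$ are observably equivalent. So I fix an operation $g$ of $P$ of type $\tau' \to \tau''$, an expression $e$ over the signature of $P$, and a value $v$; my goal is $g~(f_1~e)\tostar v$ iff $g~(f_2~e)\tostar v$. Because every operation of $P$ is totally and deterministically defined, Lemma~\ref{unique-and-finite} yields a unique value $v_e$ of $e$ (containing no $\bot$), and unique values of $f_1~v_e$, $f_2~v_e$, $g~(f_1~e)$, and $g~(f_2~e)$. Ground equivalence instantiated at the ground value $v_e$ gives that $f_1~v_e$ and $f_2~v_e$ reduce to a common value $v_f$, so by reducing $e$ to $v_e$ and then $f_i~v_e$ to $v_f$ inside the surrounding call to $g$ I obtain $g~(f_1~e) \tostar g~v_f$ and symmetrically $g~(f_2~e) \tostar g~v_f$. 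The uniqueness part of Lemma~\ref{unique-and-finite} then forces $g~(f_1~e)$ and $g~(f_2~e)$ to share the unique value of $g~v_f$, completing the iff.

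The main point requiring care is that the theorem's hypothesis must be used on the whole of $P$, not only on $f_1$ and $f_2$. Without totality and determinism of the ambient operation $g$, Lemma~\ref{unique-and-finite} would not apply to $g~(f_i~e)$, and Example~\ref{ex:well-behaved} shows that then a non-totally-defined context can distinguish functions that are ground equivalent by exploiting the partial value $\bot$. The global hypothesis of the theorem closes this loophole, and once it is invoked the rest of the argument is essentially a one-step reduction of the context followed by an appeal to uniqueness.
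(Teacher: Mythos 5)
Your proposal is correct and takes essentially the same route as the paper: reduce to observable equivalence via Theorem~\ref{theo:single-context-equivalence}, then combine Lemma~\ref{unique-and-finite} with ground equivalence and uniqueness of values. If anything, you are slightly more explicit than the paper's own proof in factoring the computation through the unique value $v_e$ of the argument $e$ before invoking ground equivalence, which is precisely the step the paper's one-line appeal to ground equivalence on an arbitrary expression leaves implicit.
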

\begin{proof}
  By Lemma \ref {unique-and-finite},
  for every expression $e$ over the signature of $P$
  there exists a unique value $v$ such that $e \tostar v$
  and no computation of $e$ is infinite.
  The proof is by induction on the number of occurrences of
  operation symbols in $e$.
  We prove the observable equivalence of $f_1$ and $f_2$.
  Let $g$ be any function of $P$ of type $\tau' \to \tau''$
  and $e$ any expressions of type $\tau'$ over the signature of $P$.
  Suppose that for some value $v$, $g\,(f_1\,e) \tostar v$.
  There exist a unique value $t$ such that
  $f_1\,e \tostar t$ and $g\,t \tostar v$.
  By the assumption of ground equivalence of $f_1$ and $f_2$,
  $f_2\,e \tostar t$ as well, thus $g\,(f_2\,e) \tostar v$.  
\end{proof}
As already mentioned, the requirement for total and deterministic
operations is strong but not unusual for some kinds of languages.
For instance, the language Agda \cite{Norell08}
is a functional language with dependent types intended
to develop verified programs \cite{Stump16}.
To ensure a consistent logic, all functions must be terminating
and deterministically and totally defined (the latter requirement
can be relaxed by requiring proofs that undefined cases
cannot occur).
Thus, checking ground equivalence is sufficient for Agda programs.

Now we have enough refined criteria to implement
an equivalence checker with a property-based checking tool.

\section{Property-based Checking}
\label{sec:currycheck}

Property-based testing is a useful technique to obtain
reliable software systems.
Testing cannot verify the correctness of programs,
but it can be performed automatically and it might prevent
wasting time when attempting to prove incorrect properties.
If proof obligations are expressed as properties,
i.e., Boolean expressions parameterized over input data,
and we test these properties with a large set of input data,
we have a higher confidence in the correctness of the properties.
This motivates the use of property testing tools
which automate the checking of properties
by random or systematic generation of test inputs.
Property-based testing has been introduced with the
QuickCheck tool \cite{ClaessenHughes00} for the functional language Haskell
and adapted to other languages, like
PrologCheck \cite{AmaralFloridoSantosCosta14} for Prolog,
PropEr \cite{PapadakisSagonas11} for the
concurrent functional language Erlang,
and EasyCheck \cite{ChristiansenFischer08FLOPS}
and CurryCheck \cite{Hanus16LOPSTR} for the
functional logic language Curry.
If the test data is generated in a systematic (and not random) manner,
like in SmallCheck \cite{RuncimanNaylorLindblad08},
GAST \cite{KoopmanAlimarineTretmansPlasmeijer03},
EasyCheck \cite{ChristiansenFischer08FLOPS},
or CurryCheck \cite{Hanus16LOPSTR},
these tools can actually verify properties for finite input domains.
In the following, we show how to extend the property-based test tool
CurryCheck to support equivalence checking of operations.

\subsection{Equivalence Testing with CurryCheck}

Properties can be defined in source programs
as top-level entities with result type \code{Prop}
and an arbitrary number of parameters.
CurryCheck offers a predefined set of property combinators
to define properties.
In order to compare expressions involving
non-deterministic operations,
CurryCheck offers the property \ccode{<\char126>}
which has the type \code{a$\,\to\,$a$\,\to\,$Prop}.\footnote{%
Actually, the parameter type \code{a} must also support
the type contexts \code{Eq} to compare values and
\code{Show} to show test inputs.}
It is satisfied if both arguments have identical result sets.
For instance, we can state the requirement that
permutations do not change the list length by the property:
\begin{curry}
permLength xs = length (perm xs) <~> length xs
\end{curry}
Since the left argument of \ccode{<\char126>}
evaluates to many (expectedly identical) values,
it is relevant that \ccode{<\char126>} compares result \emph{sets}
(rather than multi-sets).
This is reasonable from a declarative programming point of view,
since it is irrelevant how often some result is computed.

Corollary~\ref{cor:equal-partial-value-equivalence} provides a specific
criterion for equivalence testing:
Two operations \code{f1} and \code{f2} are equivalent
if, for any partial argument value, they produce
the same partial result value.
Since partial values cannot be directly compared,
we model partial values by extending total values with
an explicit ``bottom'' constructor representing the partial value $\bot$.
For instance, consider the data types used
in Example~\ref{contrived-ground-equiv}.
Assume that they are defined by:
\begin{curry}
data AB = A | B$\listline$
data C  = C AB
\end{curry}
We define their extension to partial values by renaming all constructors
and adding a bottom constructor to each type:
\begin{curry}
data P_AB = Bot_AB | P_A | P_B$\listline$
data P_C  = Bot_C  | P_C P_AB
\end{curry}
In order to compare the partial results of two operations,
we introduce operations that return 
the partial value of an expression w.r.t.\ a given
partial value, i.e., the expression is partially evaluated up to the degree
required by the partial value (and it fails if the expression
has not this value). These operations can easily be implemented
for each data type:
\begin{curry}
peval_AB :: AB -> P_AB -> P_AB
peval_AB _ Bot_AB = Bot_AB                 -- no evaluation
peval_AB A P_A    = P_A
peval_AB B P_B    = P_B$\listline$
peval_C :: C -> P_C -> P_C
peval_C _     Bot_C   = Bot_C              -- no evaluation
peval_C (C x) (P_C y) = P_C (peval_AB x y)
\end{curry}
We explain the rules of \code{peval\_C}.
The first rule does not require any evaluation of its first
argument, of type \code{C},
a condition indicated by \code{Bot\_C} in the second argument.
Therefore, the rule returns \code{Bot\_C} that stands for a
totally unevaluated expression of type \code{C}.
The second rule handles the case in which 
the first argument must be evaluated to a head constructor form,
a condition indicated by \code{P\_C} as the root of the second argument.
Hence, the first argument must be rooted by \code{C}.
The arguments \code{x} and \code{y} of
\code{C} and \code{P\_C}, respectively, must be recursively matched.  
This is accomplished by the arguments of \code{P\_C} in the
right-hand side of the rule.

Now we can test the equivalence of \code{f} and \code{g},
defined in Example~\ref{contrived-ground-equiv},
by evaluating both operations to the same partial value.
Thus, a single test consists of the application of each
operation to an input \code{x} and a partial result value \code{p}
together with checking
whether these applications produce \code{p}:
\begin{curry}
f_equiv_g :: C -> P_C -> Prop
f_equiv_g x p = peval_C (f x) p <~> peval_C (g x) p
\end{curry}
This property is checked over a set of partial values for \code{x}.
These values are generated by CurryCheck using a sophisticated
algorithm that attempts to maximize coverage through narrowing
and also uses a random component.
Finite, small domains, like \code{AB} and \code{C},
are exhaustively explored, thus
CurryCheck generates, among others, the inputs
\code{x=failed} and \code{p=(P\us{}C Bot\us{}AB)}
for which the property does not hold.
This shows that \code{f} and \code{g} are not equivalent.

In a similar way, we can model partial list result values
and test whether \code{sort} and \code{sort'},
as defined in Example~\ref{ex:permsort},
are equivalent. If the domain of list elements has
three values (like the standard type \code{Ordering}
with values \code{LT}, \code{EQ}, and \code{GT}),
CurryCheck reports a counter-example (a list with three different
elements computed up to the first element) at the 89th test.
The high number of tests is due to the fact that test inputs
as well as partial output values are enumerated to test each property.

\subsection{Reducing Test Cases}

The number of test cases can be significantly reduced by a different encoding.
Instead of enumerating operation inputs as well as partial result values,
we can enumerate operation inputs only and use a non-deterministic
operation which returns \emph{all} partial result values
of some given expression.
For our example types, these operations can be defined as follows:
\begin{curry}
pvalOf_AB :: AB -> P_AB
pvalOf_AB _ = Bot_AB
pvalOf_AB A = P_A
pvalOf_AB B = P_B$\listline$
pvalOf_C :: C -> P_C
pvalOf_C _     = Bot_C
pvalOf_C (C x) = P_C (pvalOf_AB x)
\end{curry}
Now we can test the equivalence of \code{f} and \code{g} by
checking whether both operations have the same set of partial
values for a given input:
\begin{curry}
f_equiv_g :: C -> Prop
f_equiv_g x = pvalOf_C (f x) <~> pvalOf_C (g x)
\end{curry}
CurryCheck returns the same counter-example as before.
This is also true for the permutation sort example,
but now the counter-example is found at the 11th test.

Due to the reduced search space of our second implementation
of equivalence checking, we might think that this method
should always be preferred.
However, in case of non-terminating operations, it is less powerful.
For instance, consider the operations \code{ints1} and \code{ints2}
of Example~\ref{ex:ints12}.
Since \code{ints1$\;$0} has an infinite set of partial result
values, the equivalence test with \code{pvalOf} operations
would try to compare sets with infinitely many values.
Thus, it would not terminate and would not yield a counter-example.
However, the equivalence test with \code{peval} operations
returns a counter-example by fixing a partial term
(e.g., a partial list with at least two elements)
and evaluating \code{ints1} and \code{ints2} up to this partial list.

Based on these considerations, equivalence checking is implemented
in CurryCheck as follows.
First, CurryCheck provides a specific ``operation equivalence'' property
denoted by \code{<=>}. Hence:
\begin{curry}
f_equiv_g = f <=> g
\end{curry}
denotes the property that \code{f} and \code{g} are equivalent operations.
In contrast to other properties like \ccode{<\char126>},
which are implemented by some Curry code \cite{ChristiansenFischer08FLOPS},
the property \ccode{<=>} is just a marker\footnote{CurryCheck
also ensures that both arguments of \ccode{<=>} are defined operations,
otherwise an error is reported.}
which will be transformed
by CurryCheck into a standard property based on the results
of Sect.~\ref{sec:refining}.
For this purpose, CurryCheck transforms the property above as follows:
\begin{enumerate}
\item
In the general case, CurryCheck tests whether, for each partial value,
the functions \code{f} and \code{g} compute this result.
Thus, if \code{T} is the result type of \code{f} and \code{g},
the auxiliary operation \code{peval\us{}T} (and similarly for
all types on which \code{T} depends) is generated as shown above and
the following property is generated:
\begin{curry}
f_equiv_g x p = peval_T (f x) p <~> peval_T (g x) p
\end{curry}
\item
If operations \code{f} and \code{g} are both terminating,
then the sets of partial result values are finite so that
these sets can be compared in a finite amount of time.
Thus, if \code{T} is the result type of \code{f} and \code{g},
the auxiliary operation \code{pvalOf\us{}T} (and similarly for
all types on which \code{T} depends) is generated as shown above and
the following property is generated:
\begin{curry}
f_equiv_g x = pvalOf_T (f x) <~> pvalOf_T (g x)
\end{curry}
\end{enumerate}

\subsection{Using Termination Information}

In order to decide between these transformation options,
our extension of CurryCheck
uses the analysis framework CASS \cite{HanusSkrlac14}
to approximate the termination behavior of both operations.
If the termination property of both operations can be proved
(for this purpose, CASS uses an ordering on arguments in recursive calls),
the second transformation is used, otherwise the first one is.
If the termination cannot be proved but the programmer is sure
about the termination of both operations, she can also
mark the property with the suffix \code{'TERMINATE}
to tell CurryCheck to use the second transformation.

\begin{example}
Consider the recursive and non-recursive definition of
the McCarthy 91 function:
\begin{curry}
mc91r n = if n > 100  then n-10  else mc91r (mc91r (n+11))$\listline$
mc91n n = if n > 100  then n-10  else 91
\end{curry}
Since CASS is not able to determine the termination of \code{mc91r},
we annotate the equivalence property so that CurryCheck uses
the second transformation:
\begin{curry}
mc91r_equiv_mc91n'TERMINATE = mc91r <=> mc91n
\end{curry}
\end{example}

\subsection{Generating Partial Values}

Due to the results of Sect.~\ref{sec:refining},
the generated properties must be checked with all \emph{partial} input values.
For instance, to check the property:
\begin{curry}
f_equiv_g x = pvalOf_C (f x) <~> pvalOf_C (g x)
\end{curry}
shown above, we have to test it with all partial values
of type \code{AB} for the argument \code{x},
i.e., $\bot$, \code{A}, and \code{B}.
However, CurryCheck generates only total values
for input parameters of properties.
Instead of defining specific generators for partial values
(note that CurryCheck also supports the definition of user-defined generators
for input parameters, as described in \cite{Hanus16LOPSTR}),
we exploit the already available partial data type \code{P\us{}AB}.
This type contains an explicit representation of an undefined element
but, due to typing reasons, \code{P\us{}AB} values
cannot be used as inputs to \code{f} and \code{g}.
However, we can map these values into the desired \code{AB} values:
\begin{curry}
from_P_AB :: P_AB -> AB
from_P_AB Bot_AB = failed
from_P_AB P_A    = A
from_P_AB P_B    = B
\end{curry}
Thus, the explicit bottom element is mapped into a failure,
and all other constructors are mapped into their corresponding original 
constructors.
Now we can modify the property so that
the enumeration of values for the partial data types
results in the desired partial inputs to \code{f} and \code{g}:
\begin{curry}
f_equiv_g :: P_C -> Prop
f_equiv_g px = let x = from_P_AB px
               in pvalOf_C (f x) <~> pvalOf_C (g x)
\end{curry}
The definition of equivalence properties with \code{P\us{}AB}
input values has also the advantage that we can define
a reasonable string representation to show input values
for possible counter-examples.
Since CurryCheck uses the predefined operation \code{show},
defined in the type class \code{Show},\footnote{Note that
recent Curry implementations also support type classes as in Haskell.}
to produce a string representation of counter-examples,
CurryCheck generates the following \code{Show} instance
for \code{P\us{}AB}:
\begin{curry}
instance Show P_AB where
  show Bot_AB = "failed"
  show P_A    = "A"
  show P_B    = "B"
\end{curry}
Such instances are generated for all partial data types
involved in properties.
This is useful to report in a user-friendly format
counter examples to the supposed equivalence of operations.
For example, if we apply CurryCheck to test the equivalence of
\code{sort} and \code{sort'}, as defined in Example~\ref{ex:permsort},
CurryCheck reports the following counter-example:
\begin{curry}
Arguments:
(1 : (0 : failed))
(failed : failed)
\end{curry}
The first argument is the input to the sort operations,
and the second argument is the partial result value
which must be evaluated to expose the non-equivalence
of \code{sort} and \code{sort'}.

\subsection{Productive Operations}

According to the results of Sect.~\ref{sec:refining},
checking the above properties allows us to find counter-examples
for non-equivalent operations if the domain of values
is finite (as in the example of Sect.~\ref{sec:intro})
or we enumerate enough test inputs.
An exception are specific non-terminating operations.

\begin{example}
Consider the contrived operations:
\begin{curry}
k1 = [loop,True]
k2 = [loop,False]
\end{curry}
where the evaluation of \code{loop} does not terminate.
The non-equivalence of \code{k1} and \code{k2}
can be detected by evaluating them to
\code{[$\bot$,True]} and \code{[$\bot$,False]}, respectively.
Since a systematic enumeration of all partial values
might generate the value \code{[True,$\bot$]}
before \code{[$\bot$,True]},
CurryCheck might not find the counter-example due to the non-termination
of \code{loop} (since CurryCheck performs all tests in a sequential manner).
\end{example}
Fortunately, this is a problem which rarely occurs in practice.
Not all non-terminating operations are affected by this problem
but only operations that loop without producing any data.
For instance, the non-equivalence of \code{ints1} and \code{ints2}
of Example~\ref{ex:ints12} can be shown with our approach.
Such operations are called \emph{productive} in \cite{Hanus17ICLP}.
Intuitively, productive operations always generate
some data after a finite number of steps.

In order to avoid such non-termination problems when
CurryCheck is used in an automatic manner (e.g., by a software
package manager), CurryCheck has an option for a
``safe'' execution mode. In this mode, operations
involved in an equivalence property are analyzed for
their productivity behavior. If it cannot be proved that
an operation is productive (by approximating their run-time behavior
with CASS), the equivalence check for this operation
is ignored. This ensures the termination of all equivalence tests.
The restriction to productive operations is not a serious limitation since,
as evaluated in \cite{Hanus17ICLP}, most operations
occurring in practical programs are actually productive.
If there are operations where CurryCheck cannot prove productivity,
but the programmer is sure about this property,
the property can be annotated with the suffix \code{'PRODUCTIVE}
so that it is also checked in the safe mode.

\begin{example}
\label{ex:primes}
Consider the definition of all prime numbers by
the sieve of Eratosthenes:
\begin{curry}
primes = sieve [2..]
 where sieve (x:xs) = x : sieve (filter (\y -> y `mod` x > 0) xs)
\end{curry}
After looking at the first four values of this list, a naive programmer
might think that the following prime generator is much simpler:
\begin{curry}
dummy_primes = 2 : [3,5..]
\end{curry}
Testing the equivalence of these two operations
is not possible in the safe mode, since the productivity of
\code{primes} depends on the fact that there are infinitely many prime
numbers. Hence, a more experienced programmer would annotate the
equivalence test as:
\begin{curry}
primes_equiv'PRODUCTIVE = primes <=> dummy_primes
\end{curry}
so that the equivalence will be tested even in the safe mode
and CurryCheck finds a counter-example (evaluating the result list
up to the first five elements) to this property.
\end{example}

\section{Evaluation}
\label{sec:evaluation}

This section shows a few practical results of our equivalence checking
technique by applying CurryCheck to some examples.
In particular, we will evaluate the influence of optimized
checking for terminating operations on the number of test cases.

Practical results heavily depend on the chosen set of benchmark
programs, as discussed in the following example.

\begin{example}
One can easily construct examples
of non-equivalent operations where CurryCheck will not show
a counter-example. For instance, if a ``loop'' should be compared
with a terminating operation:
\begin{curry}
l1 :: Int                      l2 :: Int
l1 = l1                        l2 = 42
\end{curry}
then the property \code{l1 <=> l2}
will either not be checked (when CurryCheck is executed
in the ``safe'' mode) or,
if it is checked, the check will not terminate.
One can also construct examples where, in principle,
CurryCheck can find counter-examples but they are not found in practice
since the set of tested partial values is too large:
\begin{curry}
m1 :: [Int]                    m2 :: [Int]
m1 = [1..1000000] ++ [1]       m2 = [1..1000000] ++ [2]
\end{curry}
Here, the counter-example must be a list of more than one million elements.
\end{example}
In order to avoid such artificial examples,
we evaluate the behavior of CurryCheck on examples
already discussed in this paper or known from the literature.
Table~\ref{table:benchmarks} summarizes our results.
For each example,\footnote{%
The source code of all examples are available from the
Curry package \fcode{currycheck}.}
it shows the number of tests to find
a counter-example to the stated equivalence property.
The column ``general'' shows this number for the general
transformation, and ``with term.'' shows the number
for the improved transformation when the termination of operations
is taken into account (or ``n/a'' if the improvement
is not applicable since the operations are not terminating).
The rightmost column contains ``yes'' or ``$\infty$''
depending on whether the check for the corresponding ground equivalence property
succeeds or does not terminate, respectively.
The entries in this column demonstrate that checking
ground equivalence only is not sufficient.

\begin{table}[t]
\begin{center}
\begin{tabular}{|l|c|c|c|}
\hline
Example & general & with term. & ground equiv. \\
\hline
\cite{BacciEtAl12} (Ex.~\ref{contrived-ground-equiv}) &    2 &   1 & yes \\
\code{Intersperse} \cite{ChristiansenSeidel11}        &   43 &   4 & yes \\
\code{Ints12}      (Ex.~\ref{ex:ints12})              &   47 & n/a & $\infty$ \\
\code{MultBin}     \cite{Christiansen11}              & 1041 &  42 & yes \\
\code{MultPeano}   \cite{Christiansen11}              &   24 &   9 & yes \\
\code{NDInsert} (Ex.~\ref{ex:insert}, \cite{Hanus13}) &    7 &   1 & yes \\
\code{Perm}     (Ex.~\ref{ex:insert}, \cite{Hanus13}) &   13 &   3 & yes \\
\code{Primes}      (Ex.~\ref{ex:primes})              &   38 & n/a & $\infty$ \\
\code{RevRev}                                         &   13 &   3 & yes \\
\code{SortEquiv}   (Ex.~\ref{ex:permsort})            &   89 &  11 & yes \\
\code{SortPermute} (Ex.~\ref{ex:permsort}, \cite{Hanus13}) & 1174 &  46 & yes \\
\code{Take}        \cite{FonerZhangLampropoulos18}    &   11 &   2 & yes \\
\code{Unzip}       \cite{Chitil11TR}                  &   27 &  11 & yes \\
\hline
\end{tabular}
\end{center}
\caption{Results of CurryCheck to disprove equivalences.
  Each example consists of two functions whose equivalence,
  or lack thereof, is investigated by CurryCheck.
  Column ``general'' reports the number of tests generated
  before a counterexample to the equivalence is found.
  Column ``with term.'' report the same value under the
  assumption of termination.  
}
\label{table:benchmarks}
\end{table}

Several of these examples are already discussed in this paper
(see examples~\ref{contrived-ground-equiv}, \ref{ex:ints12},
\ref{ex:permsort}, and \ref{ex:primes}).
A definition of a non-deterministic list insertion operation
by exploiting rules with overlapping left-hand sides
has been shown in Example~\ref{ex:insert}.
Sometimes, an alternative formulation with
non-overlapping left-hand sides and the choice operation in the
right-hand side is used \cite{Hanus13}:
\begin{curry}
insert' :: a -> [a] -> [a]
insert' e []     = [e]
insert' e (x:xs) = (e : x : xs) ? (x : insert' e xs)
\end{curry}
Although \code{insert} and \code{insert'} are ground equivalent,
they are not equivalent:
the expression
\code{head$\;$(insert$\;$1$\;$failed)} evaluates to \code{1}
whereas no value is computed when \code{insert} is replaced
by \code{insert'}.
This non-equivalence is easily detected by CurryCheck
(see example \code{NDInsert}).

The non-equivalence of \code{insert} and \code{insert'}
results also in the non-equivalence of the permutation operations
if they are defined as shown in Example~\ref{ex:insert}.
This is shown with \code{Perm} in Table~\ref{table:benchmarks}.
This slight difference in the definition of the insert operation
might require much more tests to disprove the equivalence
of sort operations based on them.
Example \code{SortPermute} in Table~\ref{table:benchmarks}
is similar to Example~\ref{ex:permsort} but use \code{insert'}
instead of \code{insert}.

The remaining examples are purely functional programs.
\code{Intersperse} compares two versions of
the list operation \code{intersperse} which inserts an
element between all succeeding elements of a list.
These two versions are discussed in \cite{ChristiansenSeidel11}
w.r.t.\ their strictness behavior (see also below).
Similarly, \code{MultPeano} and \code{MultBin} are definitions
of multiplication on Peano numbers and a binary representation
of natural numbers, respectively, which are used in
\cite{Christiansen11} to evaluate the tool Sloth for the
strictness analysis of functions.
\code{Take} compares two versions of the prelude operation \code{take}
which returns a finite prefix from a given list
\cite{FonerZhangLampropoulos18}.
The example \code{Unzip} compares two versions of the prelude operation
\code{unzip} which transforms a list of pairs into a pair of lists.
These definitions are used in \cite{Chitil11TR}
as examples for different strictness behavior.
Finally, \code{RevRev} shows the non-equivalence of a double list reverse
and the identity operation.

The numbers shown in Table~\ref{table:benchmarks}
indicate that it is useful to take the termination behavior
of operations into account.
For instance, CurryCheck typically performs only a few hundred tests
for each property.
If the number of tests is not increased from its standard value,
the non-equivalences in examples \code{MultBin} or \code{SortPermute}
would not be detected with the general transformation scheme.

\label{sec:strictness}
An observation from these examples is that seemingly equivalent
operations are often not equivalent because
they demand a different degree of their inputs
in order to compute some result.
This kind of demand is called \emph{strictness} and analyzing the
strictness properties of operations has a long tradition
in functional programming \cite{Mycroft80}.
Strictness can also be extended to functional logic languages
where one has to take into account that functional logic programs
might compute different results on a given input expression.

\begin{definition}[Strictness \cite{Hanus12ICLP}]
\label{def:strictn}
Let $f$ be an operation of type $\tau \to \tau'$.
$f$ is called \emph{strict} or \emph{demands its argument}
iff $v = \bot$ whenever $f~\bot \tostar v$ for some partial value $v$.
\end{definition}
Intuitively, $\bot$ is the only result when evaluating
a strict operation applied to an undefined argument.
The extension to operations with more than one argument is
straightforward.
Furthermore, one can also define more refined notions of strictness,
like spine strictness (demanding the complete
evaluation of a list structure but not the list elements,
e.g., as in the operation \code{length}).

Many examples of non-equivalent operations in this paper
have different strictness properties.
For instance,
\code{f} is not strict but \code{g} is strict
(Sect.~\ref{sec:intro}),
\code{f1} is strict but \code{f2} is not strict
(Example~\ref{ex:non-values}), and
\code{h1} is strict but \code{h2} is not strict
(Example~\ref{ex:well-behaved}).
Thus, we could also take strictness into account
when checking equivalence of operations.
This is justified by the following fact.

\begin{proposition}
\label{prop:strictness-equivalence}
Let $f_1,f_2$ be operations of type $\tau \to \tau'$
where $\tau$ is sensible, i.e., $\tau$ has at least one value.
If $f_1$ is strict and $f_2$ is not strict,
then $f_1$ and $f_2$ are not equivalent.
\end{proposition}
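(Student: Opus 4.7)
The plan is to apply Theorem~\ref{theo:converse-equal-partial-value-equivalence}, which reduces non-equivalence to exhibiting a single expression on which $f_1$ and $f_2$ produce different sets of partial values. So I would pick a specific witness expression, compare the two sets of partial values at it, and invoke the converse theorem to conclude.

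The natural witness is $e = \bot$, regarded as an expression of type $\tau$. By the strictness of $f_1$ (Definition~\ref{def:strictn}), every partial value $v$ with $f_1~\bot \tostar v$ satisfies $v = \bot$, so the set of partial values of $f_1~\bot$ equals $\{\bot\}$. The hypothesis that $f_2$ is not strict is precisely the negation of that condition: there exists some partial value $w \ne \bot$ with $f_2~\bot \tostar w$. Hence $w$ belongs to the set of partial values of $f_2~\bot$ but not to the set of partial values of $f_1~\bot$, so the two sets differ. Theorem~\ref{theo:converse-equal-partial-value-equivalence} applied to this witness then yields the non-equivalence of $f_1$ and $f_2$. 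The sensibility assumption on $\tau$ plays only a minor role here: it excludes the degenerate case of an uninhabited argument type and ensures that the notion of strictness is non-vacuous, in that some ground expression of type $\tau$ (e.g., \code{failed}) concretely realizes the undefined argument appearing in Definition~\ref{def:strictn}.

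Since the statement is essentially a one-step reduction to the converse theorem, no real obstacle is expected. The only subtlety worth double-checking is that the distinguishing operation $g$ built inside the proof of Theorem~\ref{theo:converse-equal-partial-value-equivalence} works when the witness expression is $\bot$ itself rather than a constructor-rooted term: inspecting that construction, the single rule $g~\bar w \to 0$ uses only the non-$\bot$ partial value $w$ on the $f_2$ side and the absence of any instance of $\bar w$ among the partial values on the $f_1$ side, both of which are supplied directly by the non-strict and strict hypotheses, so the argument transports without modification.
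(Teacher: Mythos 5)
Your proposal is correct and matches the paper's own proof essentially step for step: both take $\bot$ as the witness argument, use strictness of $f_1$ and non-strictness of $f_2$ to show the sets of partial values of $f_1~\bot$ and $f_2~\bot$ differ, and conclude by Theorem~\ref{theo:converse-equal-partial-value-equivalence}. Your extra check that the distinguishing rule $g~\bar{w} \to 0$ works when the witness expression is $\bot$ is a sound (if implicit in the paper) observation, and your reading of the sensibility hypothesis as merely ruling out degeneracy agrees with the paper, which likewise never invokes it explicitly in the proof.
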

\begin{proof}
Assume that $f_1$ is strict and $f_2$ is not strict.
By Def.~\ref{def:strictn} and the non-strictness of $f_2$
$f_2~\bot \tostar v$ for some partial value $v \neq\bot$.
By Def.~\ref{def:strictn} and the strictness of $f_1$
$f_1~\bot \tostar \bot$ but $f_1~\bot \not\tostar v$.
By Theorem~\ref{theo:converse-equal-partial-value-equivalence},
$f_1$ and $f_2$ are not equivalent.
\end{proof}
Thus, strictness information could be exploited
to improve equivalence testing of two operations $f_1$ and $f_2$ as follows.
\begin{enumerate}
\item
If both $f_1$ and $f_2$ are strict, it is not necessary
to test $f_1$ and $f_2$ on \code{failed} as an input argument.
In particular, for data types containing only constants
(0-ary constructors), the generation of partial input values,
as discussed in Sect.~\ref{sec:currycheck}, can be avoided.
\item
If $f_1$ is strict and the strictness property of $f_2$
is not definitely known,\footnote{Note that
typical strictness analysis tools \cite{Hanus12ICLP}
approximate strictness but not non-strictness, i.e.,
they either return that an operation is strict or
that an operation \emph{might} be non-strict.}
one can first evaluate \code{$f_2\;$failed}.
If it can be evaluated to some constructor-rooted expression,
then $f_1$ and $f_2$ are not equivalent, otherwise one can
proceed with equivalence checking as described
in Sect.~\ref{sec:currycheck}.
\end{enumerate}
Considering Prop.~\ref{prop:strictness-equivalence} and some examples,
one might think that equivalence checking can be split into
two separate parts: ground equivalence checking and strictness analysis.
However, this is not the case since there are ground equivalent
operations with identical strictness properties which are
not equivalent.

\begin{example}
Consider the operations \code{g1} and \code{g2}
defined by the following rules:
\begin{curry}
g1 x = 1 : head []$\listline$
g2 x = 2 : head []
\end{curry}
Since neither \code{g1} nor \code{g2} can be evaluated to some value,
\code{g1} and \code{g2} are trivially ground equivalent.
Obviously, they have identical strictness properties.
However, they are not equivalent since
\code{head$\;$(g1$\;$0)} and \code{head$\;$(g2$\;$0)} evaluate
to \code{1} and \code{2}, respectively.
\end{example}

\section{Applications}
\label{sec:applications}

In this section we describe two applications of our equivalence
checking techniques.
Since our checking approach is fully automatic,
it can be integrated into programming tools for Curry.
In the following, we describe the integration of equivalence
checking in two such tools.

\subsection{Equivalence Checking in a Software Package Manager}
\label{sec:cpm}

As discussed in the introduction,
one motivation of this work is to provide support
for semantic versioning checking in a software package manager.
Software package managers are important tools
for software development.
They use version numbers to identify different versions of a package,
where a meaning is associated with these version numbers,
often following the idea of the semantic versioning standard.
Although these version numbers are used to install
appropriate versions of packages,
there are almost no tools to support the programmer
in checking whether a given version number is correct
w.r.t.\ the semantic versioning scheme.
An exception is the Elm package manager\footnote{\url{http://elm-lang.org}}
which uses the names and types of the API to decide
about appropriate version numbers.
Obviously, such a purely syntactic check cannot detect
semantic differences when API types are not changed,
e.g., when an addition operation is replaced by a multiplication.
To detect such kinds of semantic changes,
one has to verify or at least to test the different versions
of a software package.

The integration of equivalence testing for semantic versioning
was proposed in \cite{Hanus17ICLP}, where property testing
is integrated in the Curry package manager
CPM.\footnote{\url{http://curry-language.org/tools/cpm}}
However, only ground equivalence is tested and,
as we have seen, this is too weak in the context
software packages.
Based on the results developed in this paper,
we have integrated full equivalence checking in CPM.
In the following, we sketch this implementation.

Similarly to many other software package managers,
CPM has a collection of commands to search for packages,
install and upgrade packages by resolving dependency constraints, etc.
The most interesting command is \code{diff} which
compares two versions of a package.
For instance, within the scope of a package,
we can compare the current package to a previous version,
say \code{2.1.4}, of the same package by invoking the command:
\begin{curry}
> cypm diff 2.1.4
\end{curry}
This starts a comparison process between these packages
with the following main steps:
\begin{description}
\item [API checking:]
The signatures of all data types and operations of the API,
i.e., public entities of the exported modules of the package,
are compared.
If there are any syntactic differences in entities
occurring in both packages
and the major version numbers of the packages are identical,
a violation of semantic versioning is reported.
A violation is also reported 
if there is some API entity $f$ occurring in version $a.b_1.c_1$
but not in version $a.b_2.c_2$ and $b_1$ is not greater than $b_2$.
\item[Behavior equivalence testing:]
If the major version numbers of the packages under comparison are identical,
then, for each API operation occurring in both package versions,
the equivalence of both versions of the operation is checked.
If any difference is detected, a violation is reported.
\end{description}
Whereas API checking is implemented inside CPM,
equivalence is more complex so that CurryCheck is used for this purpose.
To do so, CPM generates a set of new Curry programs
that are used by CurryCheck to perform the equivalence checks.
This is necessary since we want to compare
the behavior of some operation $f$ which is defined in two different
versions $v_1$ and $v_2$ of a package.
Thus, the modules of both package versions including all packages
on which these packages depend are copied
and renamed with the version number as a prefix.
For instance, a module \code{Mod} occurring in package version \code{2.1.4}
is copied and renamed into module \code{V\us{}2\us{}1\us{}4\us{}Mod}.
Thus, if there is an operation \code{f} occurring in module \code{Mod}
in package versions \code{2.1.4} and \code{2.2.1} to compare,
one can access both versions of this operation by the qualified
name \code{V\us{}2\us{}1\us{}4\us{}Mod.f}
and \code{V\us{}2\us{}2\us{}1\us{}Mod.f}.
After copying and renaming all modules,
CPM generates a new ``comparison'' module that
contains the following code:
\begin{curry}
import qualified V_2_1_4_Mod as V0
import qualified V_2_2_1_Mod as V1
  
test_Mod_f_Equivalent = V0.f <=> V1.f
\end{curry}
Now, this program can be passed to CurryCheck which tests
the equivalence as described in Sect.~\ref{sec:currycheck}.

This approach works only if both versions of operation \code{f}
have identical argument and result types.
However, this assumption is violated if \code{f} uses
data types defined in the package.
For instance, consider again Example~\ref{contrived-ground-equiv}
and assume that the operations are defined
with the same name but in two different versions of a package.
So we assume that a package contains in version \code{1.0.0}
a module \code{M} with the definitions:
\begin{curry}
data AB = A | B
data C  = C AB$\listline$
f x = C (h x)$\listline$
h A = A
\end{curry}
and in version \code{1.1.0} the same module but with these definitions:
\begin{curry}
data AB = A | B
data C  = C AB$\listline$
f A = C A
\end{curry}
After copying and renaming both versions as described above,
we might generate the following comparison module:
\begin{curry}
import qualified V_1_0_0_M as V0
import qualified V_1_1_0_M as V1$\listline$  
test_Mod_f_Equivalent = V0.f <=> V1.f
\end{curry}
This causes a type error since the types of
\code{V0.f} and \code{V1.f} differ:
\begin{curry}
V0.f :: V0.AB -> V0.C
V1.f :: V1.AB -> V1.C
\end{curry}
Fortunately, this is easy to fix.
Since the types are structurally identical (otherwise, the semantic versioning
is violated in the API comparison),
there exist bijective mappings between both renamed types.
Thus, CPM generates the code for these mappings:
\begin{curry}
t_AB :: V1.AB -> V0.AB
t_AB V1.A = V0.A
t_AB V1.B = V0.B$\listline$
t_C :: V1.C -> V0.C
t_C (V1.C x) = V0.C (t_AB x)
\end{curry}
By exploiting these mappings, CPM generates new operations
with modified types and a type-correct equivalence property:
\begin{curry}
M_f_1 :: V1.AB -> V0.C
M_f_1 x = t_C (V1.f x)$\listline$
M_f_2 :: V1.AB -> V0.C
M_f_2 x = V0.f (t_AB x)$\listline$
test_Mod_f_Equivalent = M_f_1 <=> M_f_2
\end{curry}
When CurryCheck tests this property, it finds the counter-example
to this equivalence as described in Sect.~\ref{sec:currycheck}.

Thanks to the techniques developed in this paper,
CPM's semantic versioning checking is a fully automatic process
and does not require any user interaction.

\subsection{Checking Implementations against their Specifications}

As discussed in \cite{AntoyHanus12PADL},
the distinctive features of Curry (e.g., non-deterministic operations,
demand-driven evaluation, functional patterns, set functions)
support writing executable high-level specifications
for a given problem.
By using purely functional features, e.g., sophisticated
data structures \cite{Okasaki98}, one can also write
efficient implementations for the same problem in Curry as well.
Thus, Curry can be used
as a wide-spectrum language for software development.
If a specification or contract is provided for some function,
one can exploit this information to support
run-time assertion checking with these specifications and contracts.
We can also use the same structures to check an implementation
against a given specification.
This can be done by exploiting equivalence checking as described
in the following.

We recall some notations introduced in \cite{AntoyHanus12PADL}.
A \emph{specification} for an operation $f$
is an operation \code{$f$'spec} of the same type as $f$.
A specification is typically written in a high-level manner
and less efficient than the actual implementation.
Nevertheless, it should be executable so that it can be used
to test the implementation in an automatic manner.
By contrast, a \emph{contract} can be weaker than a specification
and consists of a pre- and a postcondition.
If any of them is omitted, they are considered as always satisfied.
When they are explicitly defined, a \emph{precondition} for an operation $f$
of type $\tau \to \tau'$ is an operation:
\begin{curry}
$f$'pre :: $\tau$ ->$~$Bool
\end{curry}
restricting allowed argument values, whereas
a \emph{postcondition}\index{postcondition} for $f$
is an operation:
\begin{curry}
$f$'post :: $\tau$ ->$~\tau'$ ->$~$Bool
\end{curry}
which relates input and output values
(the generalization to operations with more than one argument
is straightforward).
A specification should precisely describe the meaning of an operation,
i.e., the declarative meaning of the specification and the implementation
of an operation should be equivalent.
Since they should be equivalent in any possible context,
equivalence in the sense of Def.~\ref{def:equivalence} is required.
By contrast, a contract is a partial specification, e.g.,
all results computed by the implementation should satisfy the
postcondition.

To discuss a concrete example, we consider the problem of sorting a list.
A high-level specification defines the result of sorting a given list
as a permutation of the input which is sorted.
Following Example~\ref{ex:permsort},
we define the following specification for the operation \code{sort}:
\begin{curry}
sort'spec :: [Int] -> [Int]
sort'spec xs | sorted ys = ys    where ys = perm xs
\end{curry}
To provide a simple implementation, we implement the quicksort algorithm
as follows:
\begin{curry}
sort :: [Int] -> [Int]
sort []     = []
sort (x:xs) = sort (filter (<x) xs) ++ [x] ++ sort (filter (>x) xs)
\end{curry}
Specifications and contracts are optional.
Nevertheless, they are useful in software development.
For instance, the Curry preprocessor transforms a program
with specifications and contracts so that they are used as
run-time assertions, as described in \cite{AntoyHanus12PADL}.
On the other hand, they can also be used statically
by testing them with various test inputs.
The equivalence of \code{sort} and \code{sort'spec} can be defined
as the following CurryCheck property:
\begin{curry}
sortSatisfiesSpecification = sort <=> sort'spec
\end{curry}
When this property is tested with the methods described in
Sect.~\ref{sec:currycheck},
CurryCheck reports that the above implementation of
\code{sort} is not equivalent to the specification \code{sort'spec}
for the example input \code{[0,0]}
(as the careful reader might have already noticed).

To automate this process, we have extended CurryCheck
so that it automatically generates such kinds of
equivalence properties when specifications are present
in a module to be tested.
For instance, consider a Curry module containing
a recursive specification of the factorial function
as well as an iterative implementation:
\begin{curry}
fac'spec :: Int -> Int
fac'spec n = if n==0 then 1
                     else n * fac'spec (n-1)$\listline$
fac :: Int -> Int
fac n = faci 1 1
 where faci m p = if m>n then p
                         else faci (m+1) (m*p)
\end{curry}
If we process this module with CurryCheck,
the following property is generated and tested:
\begin{curry}
facSatisfiesSpecification = fac <=> fac'spec
\end{curry}
However, this test does not terminate (or is terminated with a time out)
since \code{fac'spec} is not intended to be called with negative
integers. This intention can be expressed by the following precondition:
\begin{curry}
fac'spec'pre :: Int -> Bool
fac'spec'pre n = n >= 0
\end{curry}
Such a precondition is considered by CurryCheck when generating
test inputs, i.e., a generated input value is only accepted
for testing when it satisfies the given precondition.
If the implementation also contains a precondition,
both preconditions are taken into account.
Hence, after adding this precondition, CurryCheck
will successfully tests the equivalence of
the specification and implementation of \code{fac}.

As a final example of this section,
consider the non-deterministic list insertion.
The ``natural'' definition shown in Example~\ref{ex:insert}
is used as a specification and the implementation
is defined by disambiguating patterns in the left-hand sides
and a choice in the right-hand side of the non-trivial rule:
\begin{curry}
ndinsert'spec x ys     = x : ys
ndinsert'spec x (y:ys) = y : ndinsert'spec x ys$\listline$
ndinsert x []     = [x]
ndinsert x (y:ys) = x : y : ys  ?  y : ndinsert x ys
\end{curry}
Although the implementation seems to satisfy the specification,
CurryCheck reports an error if both input arguments are
\code{failed} and the result is computed up to head constructor form
(i.e., a constructor-rooted expression)
whereas:
\begin{curry}
ndinsert'spec failed failed
\end{curry}
has \code{failed:failed}
as a head normal form, \code{ndinsert failed failed} has no
head constructor form.
This shows that \code{ndinsert} does not satisfy its specification.

Although this example looks artificial, the introduction
of contextual equivalence checking for specifications
revealed an inconsistency in the standard module \code{Sort}
that went undetected for a long time.
This module contained various sort algorithms
(insertion sort, quick sort, merge sort, etc)
together with the following non-deterministic specification:
\begin{curry}
sort'spec xs | ys == perm xs && sorted ys = ys    where ys free
\end{curry}
Before the work described in this paper,
only ground equivalence of the implemented sort algorithms and the
specification were (successfully) tested.
After extending CurryCheck with contextual equivalence checking,
a difference was reported since this specification
is too strict: due to the strict equality in the condition
(\code{ys$\;$==$\;$perm$\;$xs}), this specification
does not yield any result for the expression:
\begin{curry}
null (permSort [failed])
\end{curry}
whereas the implementations return the result \code{False}.
CurryCheck revealed this non-equivalence by testing
the specification and each implementation with the
partial one-element input list \code{[$\bot$]}.
This behavioral difference was then fixed by relaxing the specification
as shown in Example~\ref{ex:permsort}.

\section{Towards Verification of Equivalences}
\label{sec:verification}

In the previous section we have developed a framework
to use a property-based test tool to check the
equivalence of operations.
This is useful since the equivalence checking process is fully automatic.
Although testing cannot show the absence of errors,
it is an important step in software development
since it can find errors and, if no errors are detected,
provides confidence in the developed software.
The next step to increase confidence is verification.
Although it is manual process, we want to demonstrate
in this section that our results are also helpful
to support verification of equivalence properties.

As a concrete example, we want to show the correctness
of an implementation of a sort algorithm
known as \emph{straight selection sort}~\cite{knuth1998art}.
Informally, a list is sorted by selecting its smallest element,
sorting the remaining elements, and placing the smallest element
in front of the sorted remaining elements.
Thus, the algorithm performs two operations:
(1) the selection of a smallest element and
(2) the computation of the remaining list without this element.
A very simple implementation of this algorithm performs
two traversals of the input list, one for each operation.
A more efficient algorithm performs a single traversal
by means of an operation, \code{minRest}, which selects the smallest
element and the remaining ones simultaneously but requires an
additional accumulator argument.
This results in the following implementation,
where the accumulator is the second argument of \code{minRest}:
\begin{curry}
sort []     = []
sort (x:xs) = m : sort r
 where (m,r) = minRest x [] xs$\listline$
minRest m rs []     = (m,rs)
minRest m rs (y:ys) = if m<=y then minRest m (y:rs) ys
                              else minRest y (m:rs) ys
\end{curry}
This implementation is more efficient than a direct implementation
with two list traversals in each call to \code{sort},
but it is also more complicated so that its correctness is not as apparent.
Therefore, we can apply CurryCheck to test the equivalence
of the implementation and the specification:
\begin{curry}
sort'spec xs | sorted ys = ys    where ys = perm xs
\end{curry}
Surprisingly, CurryCheck reports a counter-example which is
due to lazy pattern matching of the \code{where} clause.
\code{sort} returns a constructor-rooted term for non-empty lists
if the arguments are not evaluated, e.g., the expression:
\begin{curry}
null (sort (failed:failed))
\end{curry}
evaluates to \code{False}, whereas the same expression
with \code{sort} replaced by \code{sort'spec} has no value.

In order to obtain an implementation equivalent to the
specification, we enforce the matching on the result of \code{minRest}
by changing the definition of \code{sort} to:
\begin{curry}
sort []     = []
sort (x:xs) = case minRest x [] xs of (m,r) -> m : sort r
\end{curry}
Now CurryCheck does not find any counter-example showing
the non-equivalence of \code{sort} and \code{sort'spec}
so that we can try to verify this property.
For this purpose, we prove the following lemma about the auxiliary
operation \code{minRest}.

\newcommand{\conc}{\code{++}} 
\newcommand{\perm}{\code{perm}\xspace} 
\newcommand{\sort}{\code{sort}\xspace} 
\newcommand{\sorted}{\code{sorted}\xspace} 
\newcommand{\sortspec}{\code{sort'spec}\xspace} 
\newcommand{\minrest}{\code{minRest}\xspace} 
\newcommand{\post}[1]{#1^{post}} 

\begin{lemma}
\label{lemma:mr-sound}
Let $x$ be an integer and $rs$ and $xs$ lists of integers
such that $x \leq z$ for all $z \in rs$.
Then, for some list of integers $r$,
$\minrest~x~rs~xs \tostar (m,r)$
with $m \leq z$ for all $z \in (x:rs\conc xs)$
and $\perm~(x:rs \conc xs) \tostar m:r$ (where $rs \conc xs$
denotes the concatenation of the lists $rs$ and $xs$).
\end{lemma}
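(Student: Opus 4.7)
The plan is to proceed by induction on the length of the third argument $xs$, since the two rules of $\minrest$ either terminate on an empty $xs$ or recurse on its tail. In the base case $xs = \code{[]}$, the first rule yields $\minrest~x~rs~\code{[]} \to (x,rs)$; choosing $m = x$ and $r = rs$, the bound $m \leq z$ for every $z \in x:rs$ follows from $x \leq x$ and the hypothesis on $rs$, and $\perm~(x:rs) \tostar x:rs$ holds because $x:rs$ is itself one of the permutations produced by the non-deterministic definition of $\perm$ in Example~\ref{ex:insert}.

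For the inductive step $xs = y:ys$, I would split on the Boolean test inside $\minrest$. If $x \leq y$, the step $\minrest~x~rs~(y:ys) \to \minrest~x~(y:rs)~ys$ preserves the accumulator invariant because $x \leq y$ and, by assumption, $x \leq z$ for $z \in rs$. If $x > y$, the step produces $\minrest~y~(x:rs)~ys$, and the new invariant $y \leq z$ for $z \in x:rs$ follows from $y < x$ combined with the original bound on $rs$. In either case the induction hypothesis furnishes $m$ and $r$ such that the recursive call reduces to $(m,r)$ with $m$ a lower bound on, and $m:r$ a permutation of, the respective rotated list $x:(y:rs)\conc ys$ or $y:(x:rs)\conc ys$.

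It remains to transfer these conclusions to the original list $x:rs\conc(y:ys)$. Since the two lists in play contain exactly the same multiset of integers, the lower-bound statement transports verbatim, and for the permutation clause I would establish as a small auxiliary lemma the fact that, whenever $L_1$ and $L_2$ have the same multiset of elements, $\perm~L_1 \tostar u$ iff $\perm~L_2 \tostar u$. The main obstacle I foresee is making this multiset-equivalence for $\perm$ rigorous in the rewriting calculus of Sect.~\ref{sec:flp}: it is intuitively obvious from the definition of $\perm$ via $\code{insert}$, but a clean statement requires a separate induction (for instance on list length) showing that swapping two adjacent elements of the input does not change the set of values of $\perm$. Once that auxiliary lemma is in place, the main induction is routine bookkeeping.
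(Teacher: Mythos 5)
Your proof takes essentially the same route as the paper's: induction on the length of $xs$, the identical case split on $x \leq y$ versus $x > y$, the same preservation of the accumulator invariant, and the induction hypothesis applied to the rotated list $x:(y:rs)\conc ys$ (resp.\ $y:(x:rs)\conc ys$). The only difference is that you isolate, as an explicit auxiliary lemma, the multiset-invariance of \perm needed to transfer the conclusion back to $x:rs\conc(y:ys)$ --- a step the paper's proof invokes silently with the phrase ``hence $m:r$ is also a permutation of $(x:rs\conc y:ys)$'' --- so your version is, if anything, slightly more careful on the same argument.
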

\begin{proof}
We assume that $x \leq z$ for all $z \in rs$.
We prove the lemma by induction on the length of the list $xs$.

Base case: $xs = []$:
Since $\minrest~x~rs~[] \tostar (x,rs)$,
$x \leq z$ for all $z \in (x:rs\conc [])$ follows from the assumption
and $\perm~(x:rs\conc []) \tostar x:rs$ obviously holds
by definition of \perm.

Inductive case: $xs = y:ys$:
We assume that the claim holds for the list $ys$.
We consider the two cases occurring in the definition of $\minrest$.

If $x \leq y$, then
$\minrest~x~rs~(y:ys) \tostar \minrest~x~(y:rs)~ys$.
Since $x \leq z$ for all $z \in (y:rs)$,
we can apply the induction hypothesis to the last expression and obtain
$\minrest~x~(y:rs)~ys \tostar (m,r)$
with $x \leq z$ for all $z \in (x:y:rs\conc ys)$
and $\perm~(x:y:rs\conc ys) \tostar m:r$.
Hence $m:r$ is also a permutation of $(x:rs\conc y:ys)$ and
$m \leq z$ for all $z \in (x:rs\conc y:ys)$.

If $x > y$, then
$\minrest~x~rs~(y:ys) \tostar \minrest~y~(x:rs)~ys$.
Since $y \leq z$ for all $z \in (x:rs)$,
we can apply the induction hypothesis to the last expression and obtain
$\minrest~y~(x:rs)~ys \tostar (m,r)$
with $m \leq z$ for all $z \in (y:x:rs\conc ys)$
and $\perm~(y:x:rs\conc ys) \tostar m:r$.
Hence $m:r$ is also a permutation of $(x:rs\conc y:ys)$ and
$m \leq z$ for all $z \in (x:rs\conc y:ys)$.
\end{proof}
Now we can prove the soundness of \sort, i.e., the fact
that each value computed by \sort can also be derived by \sortspec.
\begin{lemma}
\label{lemma:sort-sound}
Let $l$ be a list of $n$ integers, for $n \geq 0$,
and $\sort~l \tostar l'$ for some value $l'$.
Then $\sortspec~l \tostar l'$.
\end{lemma}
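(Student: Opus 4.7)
The plan is to proceed by induction on the length $n$ of $l$, using Lemma~\ref{lemma:mr-sound} to analyze $\minrest$ at each step and applying the induction hypothesis to the residual list. The base case $n=0$ is immediate: the only reduction of $\sort~[]$ yields $[]$, so $l'=[]$, and $\sortspec~[] \tostar []$ follows from $\perm~[] \tostar []$ together with $\sorted~[] \tostar \code{True}$.

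For the inductive case, write $l = x:xs$ with $|xs| = n-1$ and suppose $\sort~(x:xs) \tostar l'$. The second rule of $\sort$ evaluates the \code{case} scrutinee $\minrest~x~[]~xs$, which by Lemma~\ref{lemma:mr-sound} (instantiated with $rs = []$) reduces to some pair $(m,r)$ satisfying $m \leq z$ for every $z \in x:xs$ and $\perm~(x:xs) \tostar m:r$. Hence $l' = m:l''$ for some $l''$ with $\sort~r \tostar l''$. Since $m:r$ is a permutation of $x:xs$, the residual $r$ has length $n-1$, so the induction hypothesis applies and yields $\sortspec~r \tostar l''$; in particular, $l''$ is a sorted permutation of $r$.

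It remains to assemble a derivation $\sortspec~(x:xs) \tostar m:l''$. Unfolding the definition of $\sortspec$, this amounts to exhibiting two facts: that $\perm~(x:xs) \tostar m:l''$, and that $\sorted~(m:l'') \tostar \code{True}$. The first is obtained by composing $\perm~(x:xs) \tostar m:r$ (Lemma~\ref{lemma:mr-sound}) with $\perm~r \tostar l''$ (extracted from the unfolding of $\sortspec~r$), using the fact that permuting the tail of a permutation again produces a permutation. The second follows because $m$ is a lower bound on every element of $x:xs$, hence on every element of $l''$ (which is a permutation of $r$ and therefore contains only elements of $x:xs$), combined with the already-established sortedness of $l''$.

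The hardest step I expect is the composition used above: Lemma~\ref{lemma:mr-sound} provides one specific derivation $\perm~(x:xs) \tostar m:r$, whereas $l''$ is a further non-deterministic rearrangement of $r$. To justify the composition formally, an auxiliary observation is needed --- essentially that $\perm~r \tostar l''$ entails $\perm~(m:r) \tostar m:l''$ --- which is transparent from the recursive definition of $\perm$ via $\code{insert}$ but should be stated explicitly under the call-time choice semantics so that the proof does not implicitly rely on properties of $\perm$ that have not been established in the paper.
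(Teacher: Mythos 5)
Your proposal is correct and matches the paper's proof essentially step for step: induction on the length of $l$, Lemma~\ref{lemma:mr-sound} instantiated with $rs=[\,]$ to analyze the scrutinee $\minrest~x~[\,]~xs$, the induction hypothesis applied to the residual list $r$, and a final assembly of $\sortspec~(x:xs) \tostar m:l''$ from the permutation and sortedness facts. The only difference is presentational: where you compose \perm-derivations and explicitly flag an auxiliary lemma about \perm under call-time choice, the paper silently reads ``is a permutation of'' as multiset equality and relies on the equally unproved (but evident from the definitions of \perm and \code{insert}) fact that \perm computes exactly the permutations of a value list --- so the gap you honestly flag is present, in implicit form, in the paper's own argument as well.
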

\begin{proof}
We prove the lemma by induction on the length of the list $l$.

Base case: $l = []$:
Since $\sort~[] \to []$ and $\sortspec~[] \tostar []$,
the claim obviously holds.

Inductive case: $l = x:xs$:
By definition of $\sort$, $\sort~l \tostar m:ys$ with
$\minrest~x~[]~xs \tostar (m,r)$ and $\sort~r \tostar ys$.
Lemma~\ref{lemma:mr-sound} implies
(1) $perm~(x:xs) \tostar m:r$ and
(2) $m \leq z$ for all $z \in (x:xs)$.
Since $(m:r)$ is a permutation of $(x:xs)$,
$m$ is an element of $l$ and $r$ is a list shorter than $l$.
Thus, we can apply the induction hypothesis to $r$
and we obtain $\sortspec~r \tostar ys$.
Hence, by definition of \sortspec,
(3) $perm~r \tostar ys$ and (4) $sorted~ys \tostar \true$.
Properties (1) and (3) imply that $(m:ys)$ is a permutation of $(x:xs)$.
Together with (2), $m$ is smaller than or equal to all elements of $(m:ys)$
so that, due to (4), $sorted~(m:ys) \tostar \true$.
Therefore, $\sortspec~(x:xs) \tostar m:ys$.
\end{proof}
Next we prove the completeness of \sort, i.e., the fact
that each value derived by \sortspec can also be computed by \sort.
\begin{lemma}
\label{lemma:sort-complete}
Let $l$ be a list of $n$ integers, for $n \geq 0$,
and $\sortspec~l \tostar l'$ for some value $l'$.
Then $\sort~l \tostar l'$.
\end{lemma}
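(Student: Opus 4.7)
The plan is to proceed by induction on the length of $l$, paralleling the proof of Lemma~\ref{lemma:sort-sound} and reusing Lemma~\ref{lemma:mr-sound}. The one extra ingredient I need is a short auxiliary fact, the uniqueness of sorted permutations: whenever two values $u$ and $v$ are both sorted permutations of the same list of integers, $u = v$. This is a standard elementary result, provable by induction on $u$ once one observes that the head of a sorted permutation is a minimum of the list and that deleting equal minima from two sorted permutations of the same list yields sorted permutations of the same shorter list.

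The base case $l = []$ is immediate: $\sortspec~[] \tostar l'$ forces $l' = []$ by the definitions of \perm and \sortspec, and $\sort~[] \to []$. For the inductive step $l = x:xs$, the hypothesis provides a value $l'$ that is a sorted permutation of $x:xs$; write $l' = m : ys'$. I would first invoke Lemma~\ref{lemma:mr-sound} with the empty accumulator to obtain a reduction $\minrest~x~[]~xs \tostar (m', r)$ where $m'$ is a minimum of $x:xs$ and $m':r$ is a permutation of $x:xs$. Unfolding the \code{case} in the definition of \sort then gives $\sort~(x:xs) \tostar m' : \sort~r$, provided $\sort~r$ itself reduces to a value. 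Since $r$ is a finite list of integers, a sorted permutation $r^*$ of $r$ certainly exists and is produced by $\sortspec~r$, so the induction hypothesis applied to the shorter list $r$ yields $\sort~r \tostar r^*$.

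Combining these reductions gives $\sort~(x:xs) \tostar m':r^*$. Because $m'$ is a minimum of all elements of $m':r$, it is also at most every element of $r$, and since $r^*$ is a sorted permutation of $r$ the list $m':r^*$ is itself a sorted permutation of $x:xs$. Uniqueness of sorted permutations then forces $m':r^* = l'$, which concludes the argument. The main obstacle, and the reason I cannot simply mirror the soundness proof line by line, is that \minrest deterministically commits to one specific minimum while the \perm call inside \sortspec may bind the head of $l'$ to a different occurrence of the same smallest value; so one cannot hope in general to identify $m'$ with the $m$ at the head of $l'$ when the input contains duplicates. The uniqueness-of-sorted-permutations lemma sidesteps exactly this difficulty by comparing the entire output lists rather than attempting to align their heads.
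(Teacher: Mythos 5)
Your proof is correct, but it routes around the paper's argument in an interesting way. Both proofs ultimately hinge on the same key fact, the uniqueness of sorted permutations, which the paper proves inline by exactly the induction you sketch (the heads of two sorted permutations of $l$ are both minima of $l$, hence equal, and the tails are sorted permutations of the shorter list). Where you differ is in how the value of $\sort~l$ is obtained. The paper does not redo any inductive construction: it observes that \sort and \minrest are totally defined and terminating (asserted as ``easy to check''), so $\sort~l \tostar l''$ for \emph{some} value $l''$; it then invokes the already-proved Lemma~\ref{lemma:sort-sound} to conclude $\sortspec~l \tostar l''$, and finishes with the uniqueness induction to get $l'' = l'$. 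You instead build the successful computation of $\sort~l$ explicitly by induction on the length of $l$, re-invoking Lemma~\ref{lemma:mr-sound}, and you obtain the needed value of $\sort~r$ by applying the induction hypothesis to a sorted permutation $r^*$ that $\sortspec~r$ certainly computes --- which quietly relies on the (standard, unproved in the paper, but also used there implicitly) fact that \perm reaches every permutation. The trade-off: the paper's proof is shorter and cleanly reuses soundness as a black box, while yours avoids the unproved termination/totality appeal, since your induction constructs the terminating derivation directly, at the cost of replaying part of the soundness construction. Your closing remark about duplicates is apt and is precisely why neither proof can align the head of $\sort$'s output with the head of $l'$ step by step; the paper sidesteps it the same way you do, by comparing the entire output lists via uniqueness rather than matching heads along the computation.
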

\begin{proof}
Assume that $\sortspec~l \tostar l'$ for some value $l'$.
Since it is easy to check that both \sort and $\minrest$ are totally
defined and terminating, there exists a value $l''$ with
$\sort~l \tostar l''$.
By Lemma~\ref{lemma:sort-sound},
$\sortspec~l \tostar l''$.
By definition of \sortspec, both $l'$ and $l''$
are permutations of $l$ that are sorted.
To prove the claim, we show $l'=l''$ by induction on the length $n$ of $l$.

Base case: $n=0$: Then $l'=[]=l''$ by definition of \sortspec.

Inductive case: $n>0$:
Since both $l'$ and $l''$ are permutations of $l$,
they must be non-empty so that they have the form
$l' = x:xs$ and $l'' = y:ys$.
Since $l'$ and $l''$ are sorted, $x$ and $y$ are smallest elements of $l$.
Thus $x=y$ and $xs$ and $ys$ are sorted and permutation of $l$
without element $x$. Hence, we can apply the induction hypothesis.
\end{proof}
~
Now we can finally prove that our sort implementation
is equivalent to its specification.
Similarly to property testing, we exploit the refined equivalence criteria,
in particular, Cor.~\ref{cor:equal-partial-value-equivalence},
in the proof.
\begin{corollary}
\label{cor:sort-correct}
\sort and \sortspec are equivalent.
\end{corollary}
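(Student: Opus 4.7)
The plan is to invoke Corollary~\ref{cor:equal-partial-value-equivalence}, which reduces the task to showing that for every partial list of integers $t$ and every partial value $t'$, $\sort~t \tostar t'$ iff $\sortspec~t \tostar t'$. Lemmas~\ref{lemma:sort-sound} and~\ref{lemma:sort-complete} already settle this when $t$ is a total list and $t'$ is the unique value $v$ produced by either operation; what remains is to lift the biconditional from total inputs with total outputs to partial inputs with partial outputs.

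I would first dispose of the degenerate cases: if $t' = \bot$, the \rulename{Bot} rule makes the biconditional trivial, and if $t = \bot$ then neither \sort nor \sortspec can fire any \rulename{Fun} step (both strictly pattern-match on the outer cons/nil of their argument), so $\bot$ is the only reachable partial value on each side. For a non-trivial partial $t$, I would factor any reduction $\sort~t \tostar t'$ as a sequence of \rulename{Fun} steps reaching some intermediate partial value, followed by \rulename{Bot} steps projecting it to $t'$. The \rulename{Fun}-only portion can then be lifted to a reduction $\sort~\hat t \tostar v$ on a total completion $\hat t$ of $t$ (obtained by replacing the $\bot$'s untouched by pattern matching with arbitrary integers and spine $\bot$'s with $[\,]$); the key point is that any $\bot$ in $t$ that \sort does not actually evaluate appears verbatim in $t'$, so it can be replaced without disturbing the derivation. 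Lemma~\ref{lemma:sort-sound} then yields $\sortspec~\hat t \tostar v$, and the same final \rulename{Bot} steps produce $\sortspec~t \tostar t'$. The converse direction goes through symmetrically via Lemma~\ref{lemma:sort-complete}, and Cor.~\ref{cor:equal-partial-value-equivalence} concludes.

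The main obstacle will be justifying the lifting step, which is sound only if \sort and \sortspec have matching demand patterns---that is, whenever one of them must evaluate a given position of the input to emit a given output constructor, so must the other. This alignment is exactly what the \code{case}-based reformulation of \sort was designed to achieve, since the naive \code{where}-based \sort failed here (cf.\ the counter-example \ccode{null (sort (failed:failed))} in the text). Verifying it rigorously requires a case analysis of \minrest on partial tails, showing that a $\bot$ tail propagates to a $\bot$ pair so that the surrounding \code{case} in \sort is blocked exactly when the guard \ccode{sorted ys} of \sortspec is blocked. Once this demand-alignment is in place, the rest of the argument is routine.
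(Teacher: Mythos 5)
Your overall skeleton coincides with the paper's: both reduce the claim to Cor.~\ref{cor:equal-partial-value-equivalence} and both feed Lemmas~\ref{lemma:sort-sound} and~\ref{lemma:sort-complete} into the total-input/total-output core. But your proposal defers precisely the step in which all the real content lies, and two of your bridging claims are unsound as stated. First, the lifting argument has a direction confusion: from $\sort~\hat t \tostar v$ and Lemma~\ref{lemma:sort-sound} you obtain $\sortspec~\hat t \tostar v$, and appending \rulename{Bot} steps only yields $\sortspec~\hat t \tostar t'$ --- a statement about the \emph{completed} input $\hat t$, since \rulename{Bot} steps truncate the result, not the argument. Descending to $\sortspec~t \tostar t'$ is exactly the demand-alignment property you flag as the ``main obstacle'' and never discharge. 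Second, the claim that the converse direction ``goes through symmetrically'' is false: the two sides are structurally asymmetric. \sort (with \minrest) is deterministic, totally defined and terminating, which is what lets the paper replace every \rulename{Bot} step in a derivation $\sort~l \tostar t$ by completing \rulename{Fun} sequences, reaching a value $l'$ with $l' \tostar t$. \sortspec, by contrast, is non-deterministic (\perm) and guarded, so no such completion is available; the paper instead performs a case analysis on the shape of $t$ whose crux is that $\sorted~l' \tostar \true$ forces the partial permutation $l'$ to be a \emph{total} value whenever $l'$ has at least two elements, with the residual case $t = [\bot]$ forcing $l$ to be a singleton. Your asserted Fun-then-Bot factoring of \sortspec-derivations, if proved, would contain this observation --- i.e., it is not ``routine'' but is the theorem in disguise.

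The demand-alignment sketch is also incomplete in identifiable ways. Your case analysis of \minrest ``on partial tails'' handles $xs = x:\bot$, but the paper's analysis additionally needs: inputs with $\bot$ \emph{elements} inside a total spine of length at least two, where \minrest blocks at the comparison \code{m<=y} and, on the other side, one must argue that no partial permutation of such a list can drive \sorted to \true; and, crucially, the singleton $xs = \bot:[]$, where the two operations are \emph{not} blocked --- both must be shown to produce exactly $[\bot]$ (\sort emits the head constructor without evaluating the element, and \sortspec succeeds because the singleton rule of \sorted matches without demanding it). So ``blocked exactly when blocked'' is false as a characterization; what must be established is equality of the sets of partial results, which is the four-case analysis the paper carries out directly. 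Once you do that analysis, your lifting detour for partial inputs becomes unnecessary --- on every properly partial input other than the singleton, $\bot$ is the only result on both sides --- which suggests restructuring your proof along the paper's lines rather than repairing the lifting lemma.
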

\begin{proof}
First, we prove that \sort and \sortspec computes the same set of
partial values if the argument is a list of integers $l$.

Let $t$ be a partial value with $sort~l \tostar t$.
Since both \sort and $\minrest$ are totally defined and terminating
and these are the only operations occurring in this derivation,
we can replace in the derivation $sort~l \tostar t$
each application of rule \rulename{Bot} by a sequence of
\rulename{Fun} rules
deriving the same subexpression to some value so that
we obtain a derivation $\sort~l \tostar l'$ where $l'$ is a value.
By construction, $t$ must be smaller than $l'$ w.r.t.\ information ordering
and $l' \tostar t$ (by definition of $\tostar$).
By Lemma~\ref{lemma:sort-sound}, $\sortspec~l \tostar l'$.
Hence, $\sortspec~l \tostar l' \tostar t$.

To prove the other direction, let $t$ be a partial value
with $\sortspec~l \tostar t$.
If $t$ is a value, Lemma~\ref{lemma:sort-complete} implies $\sort~l \tostar t$.
Hence, assume that $t$ has at least one occurrence of $\bot$.
We can distinguish the following cases for $t$:
\begin{description}
\item[$t = \bot$]
This case is trivial since $sort~l \to \bot$ always holds by definition
of $\tostar$.
\item[$t = {[\bot]}$]
By definition of $\sortspec$, the result list
is a permutation of the input list so that $l$ must be a one-element list,
i.e., $l = [i]$ for some integer $i$.
Thus, $\sort~l \tostar l \tostar [\bot]$ by definition of
\sort and $\tostar$.
\item[$t \neq \bot \land t \neq {[\bot]}$]
By definition, \sortspec yields a constructor-rooted result only
if $\sorted~l' \tostar \code{True}$ for some partial value $l'$
with $\perm~l \tostar l'$.
The case that $l'$ contains less than two elements can be treated as before.
If $l'$ contains at least two elements,
$\sorted~l' \tostar \code{True}$ implies that $l'$ must be a value,
i.e., a list of integers (otherwise $\sorted~l' \tostar \bot$
by definition of \sorted).
Thus, $\sortspec~l \tostar l' \tostar t$.
By Lemma~\ref{lemma:sort-complete},
$\sort~l \tostar l'$ so that $\sort~l \tostar t$.
\end{description}
Now consider an argument $xs$ which is partial value but not a value,
i.e., $xs$ contains occurrences of $\bot$.
We can distinguish the following cases:
\begin{description}
\item[$xs = \bot$]
Since both \sortspec and \sort are defined on a case distinction
of the argument list $xs$, $\bot$ is the only result.
\item[$xs = x : \bot$] Then both \sortspec and \sort yield $\bot$
as the only computable result (note that pattern matching
enforced by \code{case} in the rule of \code{sort} is
important here!).
\item[{$xs = \bot : []$}] Then both \sortspec and \sort yield
$[\bot]$ or some smaller partial value (w.r.t.\ information ordering).
\item[$xs = x : y : ys$] where $x$, $y$, or $ys$ contain $\bot$:
Since the list $xs$ contains at least two elements,
both \sortspec and \sort consider the complete list
in order to produce some head normal form.
Thus, $\sortspec~xs \tostar \bot$ and $\sort~xs \tostar \bot$ are the
only computable results.
\end{description}
Altogether, we have shown that
$\sortspec~xs \tostar t'$ iff $\sort~xs \tostar t'$
for all partial values $xs$ and $t'$.
Therefore, \sortspec and \sort are equivalent
by Cor.~\ref{cor:equal-partial-value-equivalence}.
\end{proof}
The proof in this section demonstrates a general strategy
to verify the equivalence of operations:
\begin{enumerate}
\item Show ground equivalence.
\item Consider the extension to partial values and apply
      Cor.~\ref{cor:equal-partial-value-equivalence}.
\end{enumerate}
Of course, this two-step strategy might not work for all operations.
For instance, non-terminating operations, like \code{ints1} and \code{ints2},
are trivially ground equivalent.

\section{Related Work}
\label{sec:related}

Equivalence of operations was defined for functional logic programs
in \cite{AntoyHanus12PADL}. There, this notion is applied
to relate specifications and implementations.
Moreover, it is shown how to use specifications as dynamic contracts
to check the correct behavior of implementations at run-time,
but static methods to check equivalence are not discussed.

Bacci et al.\ \cite{BacciEtAl12} formalized various notions
of equivalence, reviewed in Sect.~\ref{sec:intro},
and developed the tool AbsSpec that, from a Curry program,
derives specifications,
i.e., equations up to some fixed depth of the involved
expressions.
Although the derived specifications are equivalent
to the implementation, their method cannot be used
to check the equivalence of arbitrary operations.

QuickSpec \cite{ClaessenSmallboneHughes10} has similar goals
as AbsSpec but is based on a different setting.
QuickSpec infers specifications in the form of equations
from a given functional program but it uses a black box approach,
i.e., it uses testing to infer program properties.
Thus, it can be seen as an intermediate approach
between AbsSpec and our approach:
similarly to our approach, QuickSpec uses property-based testing
to check the correctness of specifications,
but it is restricted to functional programs,
which simplifies the notion of equivalence.

Our method to check equality of computed results for all partial
values is also related to testing properties in non-strict
functional languages \cite{DanielssonJansson04}.
Thanks to the non-deterministic features of Curry,
our approach does not require impure features
like \code{isBottom} or \code{unsafePerformIO},
which are used in \cite{DanielssonJansson04} to compare partial values.

Partial values as inputs for property-based testing
are also used in Lazy SmallCheck \cite{RuncimanNaylorLindblad08},
a test tool for Haskell which generates data in a systematic
(rather than random) manner. Partial input values are used
to reduce the number of test cases: if a property is satisfied
for a partial value, it is also satisfied for all refinements
of this partial value so that it is not necessary to test
these refinements. Thus, Lazy SmallCheck exploits partial values
to reduce the number of test cases for total values,
where in our approach partial values are used to avoid
testing with all possible contexts and to find counter examples
which might not be detected with total values only.
In contrast to our explicit encoding of partial values,
which is possible due to the logic features of Curry,
Lazy SmallCheck represents partial values as run-time errors
which are observed using imprecise exceptions
\cite{PeytonJonesEtAl99}.

As discussed in Sect.~\ref{sec:strictness},
equivalent strictness properties are a necessary
condition for contextual equivalence
and strictness information can be used to improve
equivalence checking.
Interestingly, many recent tools to analyze
the strictness behavior of functions use property-based testing
approaches for this purpose, where impure features,
as in \cite{DanielssonJansson04},
are used to deal with partial values.
For instance, the tool StrictCheck \cite{Chitil11TR}
tries to find non-least-strict functions, i.e.,
functions which can be made less strict.
However, the tool might yield false positives as well as false negatives
so that the programmer has to check the reported functions in detail.
Polymorphic functions with a minimal strictness behavior
can be analyzed with the Sloth tool \cite{ChristiansenSeidel11}
which generates examples if this behavior is not ensured.
A more recent tool, also called StrictCheck \cite{FonerZhangLampropoulos18},
supports a language where the programmer can specify
intended strictness demands of functions.
These demands are tested by catching run-time errors caused
by using undefined values similarly to Lazy SmallCheck.
The objectives of these approaches are different
from our work, since changing the strictness behavior
of functions leads to non-equivalent operations.
Nevertheless, the difference in strictness properties,
shown in the examples in these papers,
can also be detected by our methods
(see the collection of examples distributed with CurryCheck).

The use of property-based testing
to check the equivalence of operations in a software package manager
with support for semantic versioning is proposed in \cite{Hanus17ICLP}.
This approach focuses on ensuring the termination
of equivalence checking by introducing the notion
of productive operations. However, for terminating operations
only ground equivalence is tested so that the proposed semantic versioning
checking method is more restrictive than ours.
We have shown in Sect.~\ref{sec:cpm} how the
results presented in this paper can be used
to improve this initial semantic versioning tool.

\section{Conclusions}

We have presented a method to check the equivalence of
operations defined by a functional logic program.
This method is useful for software package managers
to provide automatic semantic versioning checks,
i.e., to compare two different versions of a software package,
or to check the correctness of an implementation against a specification.
Since we developed our results for a non-strict functional logic language,
the same techniques can be used to test equivalence
in purely functional languages, e.g., for Haskell programs.

We have shown that the general equivalence of operations,
which requires that the same values are computed in all possible contexts,
can be reduced to checking or proving equality of partial
results expressions.
Our results support the use of automatic property-based test tools
for equivalence checking.
Although this method is incomplete, i.e.,
it does not prove equivalence,
it may disprove it.
This knowledge saves investing time in futile attempts
to formally prove the equivalence of non-equivalent functions.
It also warns the programmer of a defect,
if her intent was to code a function equivalent to another.
Moreover, the presented results could also be helpful for
manual proof construction or using proof assistants.

For future work, it is interesting to explore how
automatic theorem provers can be used to verify
specific equivalence properties.

\paragraph{Acknowledgments.}
The authors are grateful to Finn Teegen for constructive remarks
to an initial version of this paper, and to
the anonymous reviewers for their helpful comments to improve this paper.
This material is based in part upon work supported 
by the National Science Foundation under Grant No.~1317249.


\end{document}